\providecommand{\keywords}[1]{\textbf{\textit{Index terms---}} #1}
\newcolumntype{L}[1]{>{\raggedright\let\newline\\\arraybackslash\hspace{0pt}}m{#1}}
\newcolumntype{C}[1]{>{\centering\let\newline\\\arraybackslash\hspace{0pt}}m{#1}}
\newcolumntype{R}[1]{>{\raggedleft\let\newline\\\arraybackslash\hspace{0pt}}m{#1}}
\def\multiset#1#2{\ensuremath{\left(\kern-.3em\left(\genfrac{}{}{0pt}{}{#1}{#2}\right)\kern-.3em\right)}}
\newtheorem{theorem}{Theorem}
\newtheorem{lemma}{Lemma}
\def\BibTeX{{\rm B\kern-.05em{\sc i\kern-.025em b}\kern-.08em
    T\kern-.1667em\lower.7ex\hbox{E}\kern-.125emX}}
\begin{document}
\title{Transformer-assisted Parametric CSI Feedback for mmWave Massive MIMO Systems}

\author{
    Hyungyu~Ju,~\IEEEmembership{Member,~IEEE},
    Seokhyun~Jeong,~\IEEEmembership{Member,~IEEE},
    Seungnyun~Kim,~\IEEEmembership{Member,~IEEE},
    Byungju~Lee,~\IEEEmembership{Member,~IEEE}, and
    Byonghyo~Shim,~\IEEEmembership{Senior~Member,~IEEE}\\
   
    \thanks{
			Hyungyu~Ju, Seokhyun~Jeong, and Byonghyo~Shim are with
			the Institute of New Media and Communications, 
			Seoul National University, 
			Seoul 08826,
			Republic of Korea 
			(e-mail: {hkjoo@islab.snu.ac.kr, shjeong@islab.snu.ac.kr, bshim@snu.ac.kr}).
       		Seungnyun~Kim is with 
        	the Laboratory for Information and Decision Systems (LIDS),
        	Massachusetts Institute of Technology,
        	Cambridge, MA 02139
        	USA 
                (e-mail: {snkim94@mit.edu}).
                Byungju~Lee is with the Department of Information and Telecommunication Engineering, Incheon National University,
                Incheon 22012,
                Republic of Korea 
                (e-mail: {bjlee@inu.ac.kr}).  (\textit{Corresponding author: ByungJu~Lee.}) 
	}		
}

\maketitle


\begin{abstract}
As a key technology to meet the ever-increasing data rate demand in beyond 5G and 6G communications, millimeter-wave (mmWave) massive multiple-input multiple-output (MIMO) systems have gained much attention recently.
To make the most of mmWave massive MIMO systems, acquisition of accurate channel state information (CSI) at the base station (BS) is crucial.
However, this task is by no means easy due to the CSI feedback overhead induced by the large number of antennas.
In this paper, we propose a parametric CSI feedback technique for mmWave massive MIMO systems.
Key idea of the proposed technique is to compress the mmWave MIMO channel matrix into a few geometric channel parameters (e.g., angles, delays, and path gains).
Due to the limited scattering of mmWave signal, the number of channel parameters is much smaller than the number of antennas, thereby reducing the CSI feedback overhead significantly.
Moreover, by exploiting the deep learning (DL) technique for the channel parameter extraction and the MIMO channel reconstruction, we can effectively suppress the channel quantization error. 
From the numerical results, we demonstrate that the proposed technique outperforms the conventional CSI feedback techniques in terms of normalized mean square error (NMSE) and bit error rate (BER).
\end{abstract}

\keywords{Parametric CSI feedback, deep learning, Transformer, mmWave, massive MIMO.}

\section{Introduction}
\IEEEPARstart{R}{ecently}, to accommodate the rapidly growing data demand for beyond 5G and 6G networks, the study of the upper mid-band (i.e., 7-24$\,$GHz) was approved at the International Telecommunication Union (ITU) World Radiocommunications Conference-23 (WRC-23)~\cite{WRC23}.
By leveraging the abundant frequency resource in the higher frequency bands, millimeter wave (mmWave) communications can support future-oriented services such as extended reality (XR) devices, digital twin, and metaverse~\cite{rappaport2013millimeter,ji2018ultra}.
Since the transmit signals in mmWave bands suffer from severe attenuation due to the serious path loss and atmospheric absorption, beamforming techniques realized by the massive multiple-input multiple-output (MIMO) antenna arrays have been popularly used to compensate the loss~\cite{ahn2022towards}.
To fully enjoy benefits of beamforming technique, a base station (BS) should acquire an accurate channel state information (CSI)~\cite{kim2022parametric}.

In 5G New Radio (NR), the user equipment (UE) selects a directional beam from the Type I codebook and then feeds back the beam index when the line-of-sight (LoS) component is dominant~\cite{flordelis2018massive}.
In case there exist multiple strong scatterers, Type II codebook can be used to generate a linear combination of beams, each of which is mapped to the scatterer~\cite{guo2022overview}.
A well-known drawback of the conventional codebook-based feedback mechanism is the quantization error and significant feedback overhead being proportional to the number of antennas and subcarriers~\cite{lee2015antenna}.

The standard body in 5G-Advanced (specified in 3rd Generation Partnership Project (3GPP) Release 18~\cite{3gpp.38.843}) decided to use the artificial intelligence (AI)-aided CSI feedback mechanism since the AI-based channel compression is shown to be effective in improving feedback accuracy while accessing a low feedback overhead~\cite{guo2022ai, 9838669}.
In~\cite{wen2018deep, liu2020efficient}, AI-based CSI feedback techniques that learn the mapping function between the channel matrix and codeword using deep neural network (DNN) have been proposed. 
Among various DNN architectures, a convolutional neural network (CNN) has been popularly studied for its ability to extract the local and spatial characteristics of the 2D-uniform planar array (UPA) channels~\cite{liu2020efficient}.
A potential problem of the CNN-based technique is that it does not capture the correlation among spaced apart channel elements in time, space, and frequency domains. 
In other words, long-term correlated feature caused by the UE mobility, blockage, and MIMO antenna array cannot be effectively extracted in the CNN-based techniques.

Another issue of the conventional feedback mechanism, in particular for the current 5G standard, is that they do not consider the rapid channel variation in mmWave systems.
Due to the high UE mobility, channel coherence time of the mmWave systems is in general very short.
For example, in $28\,$GHz band, the channel coherence time of the moving UE at the speed of $30\,$km/h is around $1.2\,$ms, while the smallest period of the uplink reference signal (e.g., sounding reference signal (SRS) in 5G NR) is $2\,$ms \cite{3gpp.36.211}.\footnote{The channel coherence time is $T_c=\frac{1}{f_m}$ where $f_m = \frac{v f_c}{c}$ is the maximum Doppler spread, $v$ is the speed of UE, $f_c$ is the carrier frequency, and $c$ is the speed of light. By plugging $v=30\,$km/h and $f_c=28\,$GHz, we obtain $T_c \approx 1.2\,$ms.}
Note that the shortened channel coherence time results in a drastic variation of mmWave channel, leading to a significant mismatch between the estimated and actual channels during the data transmission. This phenomenon, often called $\textit{channel}$ $\textit{aging}$ $\textit{effect}$, is a serious problem yet there are few tractable solutions for the problem at hand.

An aim of this paper is to propose an AI-assisted CSI feedback mechanism that addresses aforementioned problems while achieving a reduction in feedback overhead.
The proposed technique, referred to as $\textit{comprehensive}$ $\textit{parametric}$ $\textit{CSI}$ $\textit{feedback}$ $\textit{based}$ $\textit{on}$ $\textit{Transformer}$ $\textit{(COMPaCT)}$, compresses the full channel matrix into a few geometric channel parameters (i.e., angle, delay, and path gain) and then performs the quantization. After receiving the quantized version of the channel parameters, BS recovers it to the original channel matrix via the deep learning (DL)-based decoder.
Since the mmWave channel is expressed as a function of the geometric parameters and these parameters are continuously changed based on the movement of UE, the geometric parameters of mmWave channels are strongly correlated in time.
In real-world scenarios, fluctuations in small-scale fading due to environmental changes (e.g., blockage, tunnel, or scatterer) cause a rapid variation in geometric channel parameters, weakening the temporal correlation among channel samples. 
Even in such cases, long-term spatial correlation caused by the large-scale fading components tends to be preserved.
For this reason, to properly exploit both temporally adjacent and distant channels is of great importance.

As a main ingredient to estimate the downlink channel parameters from the sequence of past channels, we exploit $\textit{Transformer}$, a state-of-the-art DL model.
A key feature of Transformer is the attention mechanism quantifying the long- and short-term correlations between input and output sequences~\cite{vaswani2017attention}.
By assigning relatively large attention weights to input data (i.e., past channels) that are highly related to the output values (i.e., downlink geometric channel parameters), we can effectively extract downlink channel parameters, which helps to mitigate the channel aging effect.

The main contributions of this paper are as follows.
\begin{itemize}
    \item We propose a Transformer-assisted channel estimation technique to estimate the sparse geometric channel parameters at the desired (future) time instant of downlink beamforming.
    Specifically, by exploiting both temporally adjacent and distant channels, the proposed COMPaCT can accurately estimate the downlink channel parameters, which ensures accurate beamforming even in the presence of mobility.

    \item We propose a feedback mechanism that delivers the quantized version of geometric channel parameters to the BS. Since the mmWave channel is expressed as channel parameters of LoS and a few non-LoS (NLoS) paths, we can significantly reduce the dimension of a channel vector to be quantized.
    
    \item We present a feedback bit allocation strategy to minimize the channel quantization error caused by the mismatch between the desired and quantized channel parameters. 
    Since each channel parameter induces a different level of channel quantization error, we allocate feedback bits differently.
    In fact, the normalized mean square error (NMSE) of angle or delay is much higher than that of path gain when the same number of feedback bits is used , so we prioritize angle and delay over path gain in the feedback bit allocation.
    
    \item Through numerical results, we show that the proposed COMPaCT achieves a considerable gain over the conventional CSI feedback and channel acquisition schemes in terms of NMSE. In particular, the proposed COMPaCT achieves more than $3.5\,$dB and $2\,$dB NMSE gain over the conventional CNN and the reciprocity-based feedback mechanisms in practical mmWave scenarios. 

\end{itemize}

The rest of this paper is organized as follows. In Section II, we discuss the mmWave MIMO system and briefly explain conventional CSI feedback and channel acquisition techniques. In Section III, we present COMPaCT and a detailed description of the model. In Section IV, we propose the parameter selective bit allocation method for the quantization process. In Section V, we demonstrate experimental results to validate the efficacy of the proposed technique and then conclude the paper in Section VI.

\textit{Notations}: Upper and lower case symbols are used to denote matrices and vectors, respectively.
The superscript $\left(\cdot\right)^{\text{T}}$ and $\left(\cdot\right)^{\text{H}}$ denote the transpose and the Hermitian transpose, respectively.
$\mathbf{X}_1\otimes \mathbf{X}_2$ and $\mathbf{X}_1\circ \mathbf{X}_2$ denote the Kronecker product and Hadamard product of $\mathbf{X}_1$ and $\mathbf{X}_2$, respectively. 
$||\mathbf{x}||_2$ and $\Vert \mathbf{X} \Vert_F$ denote the Euclidean norm of a vector $\mathbf{x}$ and the Frobenius norm of the matrix $\mathbf{X}$, respectively. 
The $m$-by-$m$ identity matrix is denoted by $\mathbf{I}_m$.
Also, diag($\mathbf{X}$) denotes a block diagonal matrix whose diagonal elements are $\mathbf{X}$.
$\Re(x)$ and $\Im(x)$ are the real and imaginary part of $x$, respectively.
In addition, $\mathbf{0}_K$ and $\mathbf{1}_K$ denote $K\times 1$ zero vector and one vector, respectively.

\section{mmWave Massive MIMO System Model}
In this section, we briefly explain the downlink mmWave system model and then review the conventional CSI feedback and channel acquisition techniques. 

\subsection{Downlink mmWave System Model}
We consider the frequency division duplexing (FDD) MIMO-OFDM (orthogonal frequency division multiplexing) downlink systems where BS equipped with $N_t$ transmit antennas serves the UE equipped with $N_r$ antennas.
Specifically, $N_f$ subcarriers and $T$ time frames are used for the downlink pilot transmission.
The carrier frequency is $f_c$ and the bandwidth is $B$.
In our work, we use $\mathcal{S} = \{1,\cdots,N_f\}$ and $\mathcal{T}=\{1,\cdots,T\}$ to denote the sets of indices of pilot subcarriers and time frames.
Under the assumption that $N_t$ $\gg$ $N_r$, we set $N_r = 1$ for simplicity.
In a time-varying channel, the received signal $y_{t}[s]\in\mathbb{C}$ of UE with $s$-th subcarrier at $t$-th time frame is
\begin{equation}
    y_t[s] = \mathbf{h}[s]^{\mathrm{H}} \mathbf{r}_t x_t[s] + n_t[s], \, \quad \forall s \in \mathcal{S}, \, t \in \mathcal{T}
\end{equation}
where $\mathbf{h}[s] \in \mathbb{C}^{N_t \times 1}$ is the downlink channel vector in the frequency domain, $\mathbf{r}_t\in \mathbb{C}^{N_t\times 1}$ is the beamforming vector based on the downlink CSI, $x_t[s] \in \mathbb{C}$ is the downlink pilot symbol, and $n_t[s]\sim\mathcal{CN}(0,\sigma_{n}^{2})$ is the additive Gaussian noise of $s$-th subcarrier.

In this work, we use the geometric multipath channel model where the time domain channel response vector at $t$-th time frame $\mathbf{h}(t)$ is expressed as
\begin{equation}
    \mathbf{h}(t) = \sum_{l=1}^{L} \alpha_{l}(t) \delta(\tau-\tau_l(t))\mathbf{a}_{t}(\theta_l(t)), \,\quad \forall t \in \mathcal{T} \label{channel_response}
\end{equation}
where $L$ is the number of effective propagation paths, $\theta_l\sim \text{Unif}[0,2\pi)$ is the angle of departure (AoD), $\tau_l\sim \text{Unif}[0,\tau_{\text{max}}]$ is the time delay, and $\alpha_l = \beta_l e^{j\phi_l}$ is the complex gain consisting of the path loss $\beta_l\sim \text{Unif}[0,\beta_{\text{max}}]$ and the phase $\phi_l\sim \text{Unif}[0,2\pi)$ of the $l$-th path, respectively. 
Note that $\tau_{\text{max}}$ and $\beta_{\text{max}}$ are the maximum time delay and path loss, respectively.
Also, $\mathbf{a}_{t}(\theta_l(t))\in \mathbb{C}^{N_t \times 1}$ is the array steering vector of BS, which can be expressed as 
\begin{align}
\mathbf{a}_{t}(\theta_l(t)) =  \Big[1\, e^{-j\frac{2\pi d \sin{\theta_l(t)}}{\lambda}} \dotsc e^{-j(N_t-1)\frac{2\pi d \sin{\theta_l(t)}}{\lambda}}\Big]^{\mathrm{T}}
\end{align}
where $\lambda$ is the signal wavelength and $d$ is the antenna spacing.
Applying the discrete Fourier transform (DFT), the frequency domain channel vector $\mathbf{h}[s]\in \mathbb{C}^{N_t \times 1}$ becomes
\begin{align}
    \mathbf{h}[s] = \sum_{l=1}^{L} \beta_l(t) e^{j\phi_l(t)}e^{-j2\pi f_s \tau_l(t) }\mathbf{a}_{t}(\theta_l(t)), \, \quad \forall s \in \mathcal{S}  \label{freq_channel}
\end{align}
where $f_s = f_c -\frac{B}{2}+\frac{B}{N_f}(s-1)$ is the $s$-th subcarrier frequency.
By concatenating $\mathbf{h}[1],\cdots,\mathbf{h}[N_{f}]$, we obtain the CSI matrix of the frequency domain $\mathbf{H}\in\mathbb{C}^{N_{f}\times N_t}$ given by
\begin{align}
    \mathbf{H} &=\big[\mathbf{h}[1]\cdots\mathbf{h}[N_{f}]\big]^{\mathrm{H}}  \\ 
    &= \sum_{l=1}^{L} \beta_l(t) e^{-j\phi_l(t)}\mathbf{a}_f(\tau_l(t))\mathbf{a}^{\mathrm{H}}_{t}(\theta_l(t))
\label{channel-m}
\end{align}
where $\mathbf{a}_f(\tau_l(t))\in\mathbb{C}^{N_{f} \times 1}$ is the phase shift vector of the OFDM subcarriers defined as~\cite{han2020deep}
\begin{equation}
    \mathbf{a}_f(\tau_l(t)) =  \big[e^{j2\pi \tau_l(t) f_1}\, e^{j2\pi \tau_l(t) f_2}\cdots\, e^{j2\pi \tau_l(t) f_{N_f}}\big]^{\mathrm{T}}.
\end{equation}
In a matrix form, $\mathbf{H}$ can be expressed by the geometric channel parameters:
\begin{align}
    &\,\,\mathbf{H}(\boldsymbol{\theta}(t), \boldsymbol{\tau}(t), \boldsymbol{\beta}(t), \boldsymbol{\phi}(t)) \\ &= \mathbf{A}_f(\boldsymbol{\tau}(t))\mathrm{diag}(\boldsymbol{\beta}(t))\mathrm{diag}(e^{-j\boldsymbol{\phi}(t)})\mathbf{A}^{\mathrm{H}}_{t}(\boldsymbol{\theta}(t))
    \label{parameterize}
\end{align}
where $\mathbf{A}_{t}(\boldsymbol{\theta}(t))\in\mathbb{C}^{N_{t}\times L}$ and $\mathbf{A}_{f}(\boldsymbol{\tau}(t))\in\mathbb{C}^{N_{f}\times L}$ are the array steering matrices given by
\begin{align}
    \mathbf{A}_{t}(\boldsymbol{\theta}(t)) =& [\mathbf{a}_{t}(\theta_1(t)),\cdots, \mathbf{a}_{t}(\theta_L(t))]\\
    \mathbf{A}_{f}(\boldsymbol{\tau}(t)) =& [\mathbf{a}_f(\tau_1(t)),\cdots, \mathbf{a}_f(\tau_L(t))]
\end{align}
and $\boldsymbol{\theta}(t)=[\theta_{1}(t),\cdots,\theta_{L}(t)]$, $\boldsymbol{\tau}(t)=[\tau_{1}(t),\cdots,\tau_{L}(t)]$,  $\boldsymbol{\beta}(t)=[\beta_{1}(t),\cdots,\beta_{L}(t)]$, and $\boldsymbol{\phi}(t)=[\phi_{1}(t),\cdots,\phi_{L}(t)]$.
Since the mmWave channel matrix $\mathbf{H}$ can be expressed as a function of AoDs $\boldsymbol{\theta}$, delays $\boldsymbol{\tau}$, path losses $\boldsymbol{\beta}$, and phases $\boldsymbol{\phi}$, variation of these parameters due to UE mobility will lead to a fast change in channel (i.e., $\mathbf{h}(t+\Delta t) \neq \mathbf{h}(t)$).
Thus, without a proper consideration of this channel aging effect, there will be a significant mismatch between the original and reconstructed channels, especially when used for the downlink beamforming. 

\subsection{Conventional mmWave MIMO CSI Feedback and Downlink Channel Acquisition}

\subsubsection{MIMO CSI Feedback}

In the codebook-based CSI feedback method such as the random vector quantization~\cite{love2008overview}, UE estimates the downlink channel vector $\mathbf{h}$ using the downlink pilot signal and then quantizes the normalized channel vector $\Bar{\mathbf{h}}=\frac{\mathbf{h}}{||\mathbf{h}||}$. This is done by picking the quantized vector (codeword) $\mathbf{c}_{\hat{i}}$ from a pre-defined $B$-bit quantization codebook $\mathcal{C}=\{\mathbf{c}_1,\cdots,\mathbf{c}_{2^B}\}$  closest to the channel direction:
\begin{align}
    \mathbf{c}_{\hat{i}} = \arg \max_{\mathbf{c}_i\in \mathcal{C}} |\Bar{\mathbf{h}}^{\mathrm{H}}\mathbf{c}_i|^2.
\end{align}
Then, the UE feeds back the index of the selected codeword $\hat{i}$ to the BS via the uplink control channel (e.g., PUCCH~\cite{3gpp.36.212}).
It has been shown that the number of feedback bits $B$ should be proportional to the channel dimension $N_f N_t$ and the SNR (in decibels) to control quantization distortion properly~\cite{kim2020downlink}
\begin{align}
    B \approx \frac{N_f N_t - 1}{3} \times \text{SNR}.
\end{align}
In the mmWave massive MIMO systems, substantial feedback resource is needed due to a large number of antennas and subcarriers. 
Lack of feedback resources will result in a degradation of CSI reconstruction performance, causing a significant drop in the beamforming gain. 

To reduce huge feedback overhead in massive MIMO systems, CSI extrapolation techniques have been proposed~\cite{lin2021deep, zhang2021deep}.
Further, DL-based CSI feedback techniques using CNN or long short-term memory (LSTM) that compress large-dimensional channels into an implicit codeword have been proposed~\cite{wen2018deep, wang2018deep}.
In conventional DL-based approaches, UE compresses the input channel matrix $\mathbf{H}$ into the codeword $\mathbf{c}$ via the encoder $f_{\text{enc}}$ (i.e., $\mathbf{c} = f_{\text{enc}}(\mathbf{H}; \boldsymbol{\delta}) $, where $\boldsymbol{\delta}$ is the network parameters of the encoder) \cite{kim2023towards}.
After BS receives $\mathbf{c}$ from UE, the codeword $\mathbf{c}$ passes through the decoder $f_{\text{dec}}$ to generate the reconstructed CSI $\mathbf{\hat{H}}$ (i.e., $\mathbf{\hat{H}} = f_{\text{dec}}(\mathbf{c}; \boldsymbol{\gamma}) $, where $\boldsymbol{\gamma}$ is the network parameters of the decoder).  
The overall end-to-end CSI feedback process can be expressed as
\begin{align}
\label{Ha_reconstruction}
    \hat{\mathbf{H}} = f_{\text{dec}}(f_{\text{enc}}(\mathbf{H}; \boldsymbol{\delta}); \boldsymbol{\gamma}).
\end{align}
There are two main issues for the conventional DL-based CSI feedback techniques.
First, conventional DL-based techniques suffer from the channel aging effect since they do not consider the temporal channel variation caused by UE mobility.
Second, conventional LSTM and CNN-based approaches are not so efficient in capturing the temporally or spatially distant correlation of mmWave channels, since these approaches are specialized in extracting the temporally/spatially adjacent correlation of the channel.\footnote{The number of CNN's kernel window is $3\sim5$ and the LSTM suffers from vanishing gradient problem~\cite{shi2015convolutional}.}

\subsubsection{Downlink Channel Acquisition}
\begin{figure*}[t]
\begin{center}
  \begin{subfigure}{0.35\textwidth}
    \includegraphics[width=\linewidth]{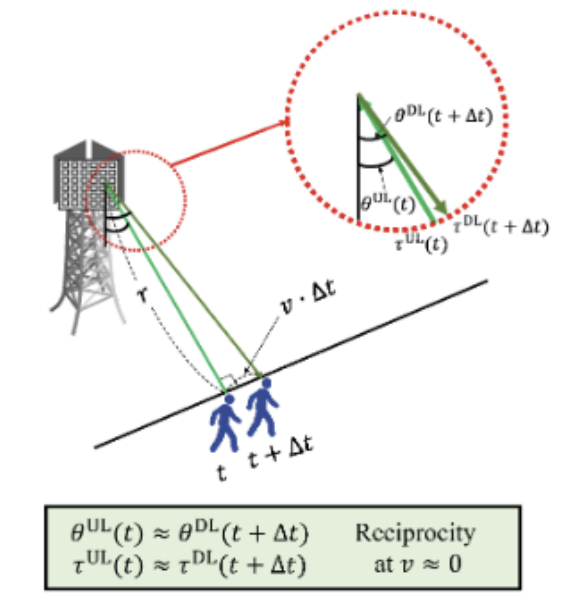}
    \caption{Stationary scenario} \label{fig:1a}
  \end{subfigure}%
  \hspace{1.0cm}
  \begin{subfigure}{0.35\textwidth}
    \includegraphics[width=\linewidth]{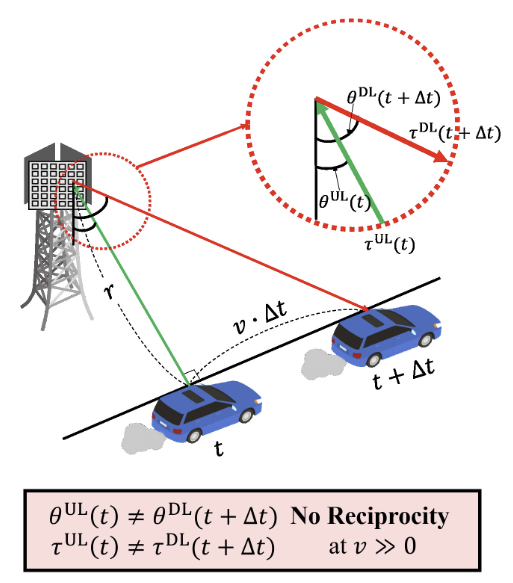}
    \caption{Non-stationary scenario} \label{fig:1b}
  \end{subfigure}
  \caption{mmWave Massive MIMO system with moving UE; The angle-delay reciprocity holds in stationary (quasi-static) but not in non-stationary scenarios.} \label{system}
\end{center}
\vspace{-0.2cm}
\end{figure*}
The conventional downlink channel acquisition technique relies on the angular-delay reciprocity between the uplink and downlink paths~\cite{han2019tracking}, since there is essentially no difference between the path angle and time delay for the uplink and the downlink.
That is,
\begin{align}
    \theta^{\text{DL}}_l(t+\Delta t) &\approx \theta^{\text{UL}}_l(t) \label{angle_c}\\
    \tau_l^{\text{DL}}(t+\Delta t) &\approx \tau^{\text{UL}}_l(t) \label{delay_c}
\end{align}
where $t+\Delta t$ is the time unit for downlink beamforming.
The frequency-independent parameters (i.e., AoDs $\boldsymbol{\theta}$ and delays $\boldsymbol{\tau}$) are extracted from the uplink pilot signal, and the complex gain can be estimated by exploiting the angular-delay reciprocity.

Although the assumptions in (\ref{angle_c}) and (\ref{delay_c}) hold true in a stationary scenario, they might not work well for the non-stationary scenario, especially in the presence of UE mobility (see Fig.~\ref{system}). 
For example, when the distance between BS and UE is $r=10\,\text{m}$ and the UE moves at a speed of $v =72$ km/h during $\Delta t = 10\,\text{ms}$ (specified in 3GPP TS 38.211~\cite{3gpp.38.211}), angle variation $\Delta \theta$ and delay variation $\Delta \tau$ are approximated to
\begin{align}
    \Delta \theta &= \arctan \Big(\frac{v\cdot \Delta t}{r}\Big) \approx 1.143^{\circ} \\
    \Delta \tau &= \frac{\sqrt{r^2+(v\cdot \Delta t)^2}-r}{c} = 5\times 10^{-12}
\end{align}
where $c$ is the speed of light. 
\begin{figure}[t]
    \centering 
    \includegraphics[width=1\linewidth]{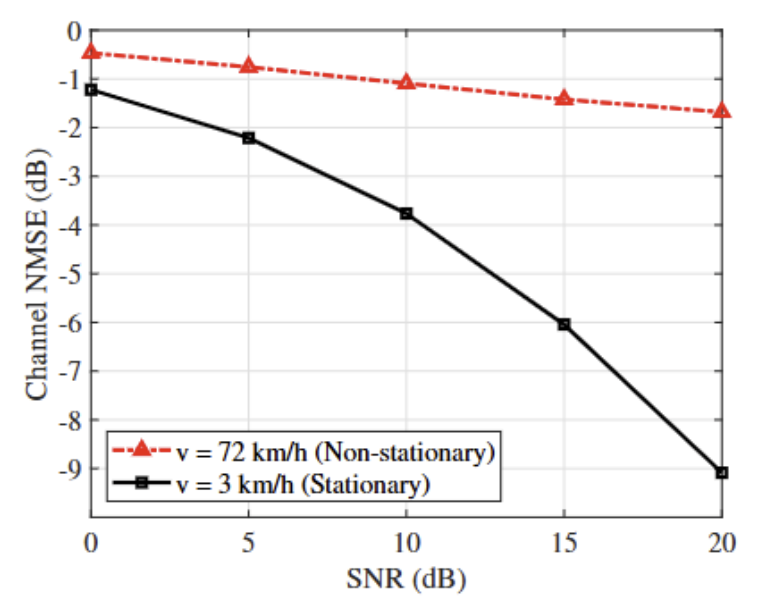}
    \caption{Channel NMSE according to UE mobility.}
    \label{result_sys}
\end{figure}
To evaluate the impact of these variations, we plot the channel NMSE between the true channel and the reconstructed channel: 
\begin{align}
    \text{NMSE} = \mathbb{E}\left[\frac{\Vert \hat{\mathbf{H}} -\mathbf{H}\Vert_{\mathrm{F}}^2}{\Vert \mathbf{H}\Vert_{\mathrm{F}}^2}\right].
\end{align}

As shown in Fig.~\ref{result_sys}, variations of angle and delay in the non-stationary scenario cause an additional $3\sim4\,$dB loss in the channel NMSE compared to the stationary scenario \cite{wu2016survey}.
Particularly, the angle mismatch $\Delta \theta$ between the pre-defined beam direction and real one might result in a significant beam misalignment, causing a severe loss in beamforming gain.\footnote{For example, within the 28 GHz bands, the half power beam width, representing the angle range where the relative power exceeds 50$\%$ of the peak power of the main beam, is around 10 degrees \cite{rappaport2013millimeter}.}

\section{Transformer-based Parametric CSI Feedback}

In this section, we propose a DL-based channel feedback mechanism to deal with the channel aging effect while achieving a reduction in the feedback overhead.
Exploiting the property that the channel matrix $\mathbf{H}$ can be parameterized by AoDs, time delays, path losses, and phases, the proposed COMPaCT feeds back the quantized version of the parametric CSI $\mathbf{P}\in \mathbb R^{L \times 4}$: 
\begin{align}
    \mathbf{P} = \{\boldsymbol{\theta}, \boldsymbol{\tau}, \boldsymbol{\beta}, \boldsymbol{\phi}\}.
\end{align}
Since the number of effective propagation paths $L$ is much smaller than the total number of antennas $N_t$ (e.g., $L=2\sim 5$ and $N_t=32\sim 256$), compression of parametric CSI is far more efficient than the codebook-based channel feedback.
For example, if the numbers of antennas and subcarriers are $64$ and $1024$, then the original channel dimension for the feedback is $N_f N_t=2^{16}$ but the dimension of $\mathbf{P}$ is $4L\approx 10\sim25$.\footnote{Note that the compression ratio of the proposed COMPaCT is $\frac{4L}{2 N_f N_t}$. When $L = 4$, $N_f = 1024$, and $N_t = 64$, the compression ratio of COMPaCT is $\frac{4\cdot 4}{2\cdot1024\cdot 64}=\frac{1}{9192}$ compared to transmitting the full channel matrix.}

\begin{figure*}[t] 
\centering 
\includegraphics[width=0.9\linewidth]{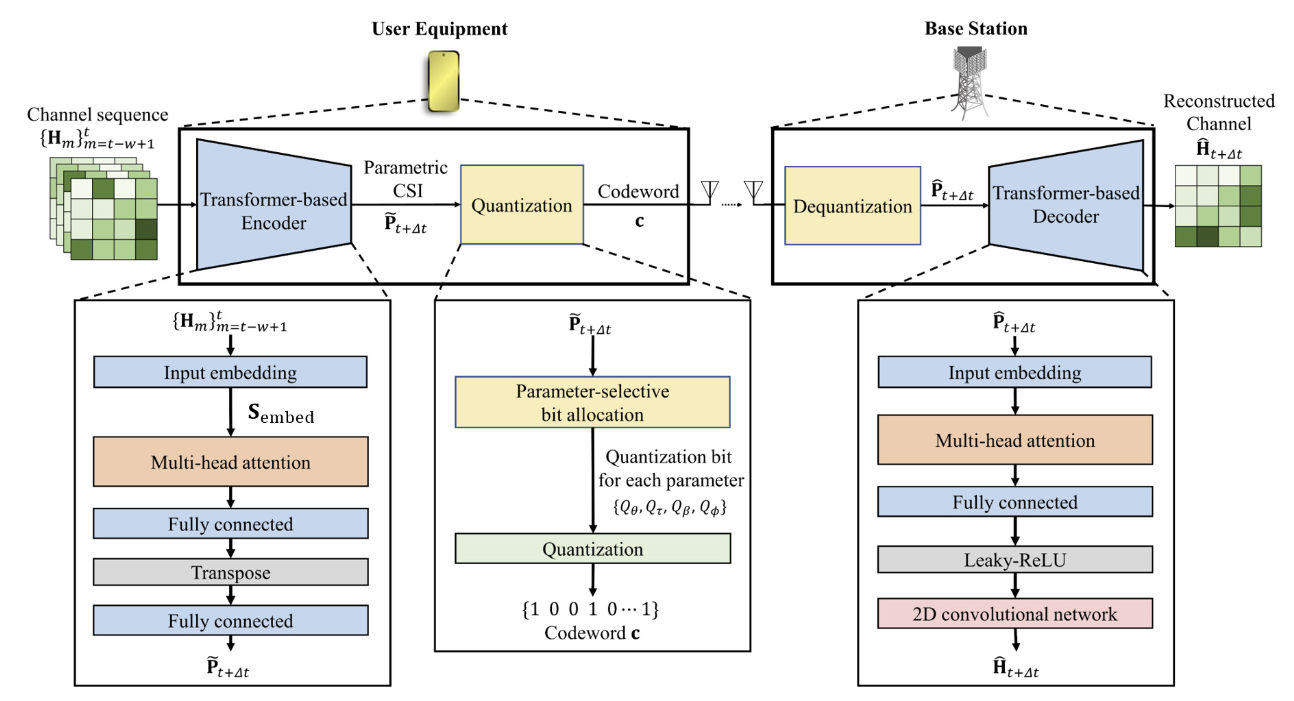}
\vspace{-0.3cm}
\caption{Overall process of the proposed COMPaCT.}
\label{overall}
\end{figure*}
Another key ingredient of COMPaCT is the Transformer-based framework that estimates the parametric CSI of the downlink channel $\mathbf{P}_{t+\Delta t}$ from the $w$-step channel sequence $\{\mathbf{H}_m\}_{m=t-w+1}^{t}$.
As mentioned, Transformer is effective in extracting temporally and spatially correlated features of mmWave massive MIMO channels.
By measuring the long- and short-term correlations between the previous channels using the attention block, and then assigning relatively large attention weights to the (previous) channels that are strongly correlated with the desired channel time $t+\Delta t$, the proposed COMPaCT can accurately estimate downlink parametric CSI for the desired time.

\subsection{Overall Process of COMPaCT}
The proposed COMPaCT consists of two main components: 1) Transformer-based encoder that generates the geometric channel parameters from sequence of historical channels and 2) Transformer-based decoder to reconstruct the original CSI from the quantized parameters.
First, Transformer-based encoder generates the parametric CSI $\widetilde{\mathbf{P}}_{t+\Delta t}$ $\in \mathbb{R}^{L\times 4}$ as
\begin{align}
    \widetilde{\mathbf{P}}_{t+\Delta t} = f_{\text{enc}}(\{\mathbf{H}_m\}_{m=t-w+1}^{t}, \boldsymbol{\delta})
    \label{1.p}
\end{align}
where $f_{\text{enc}}$ and $\boldsymbol{\delta}$ are the channel compression module and network parameters of the proposed COMPaCT encoder, respectively.
Second, the parametric CSI $\widetilde{\mathbf{P}}_{t+\Delta t}$ is quantized into the codeword $\mathbf{c}$ by the quantization module in UE as
\begin{align}
    \mathbf{c} = f_{\text{qnt}}(\widetilde{\mathbf{P}}_{t+\Delta t})
\end{align}
where $f_{\text{qnt}}$ is the quantization function of proposed COMPaCT.
Then, UE feeds back the quantized codeword $\mathbf{c}$ to BS via the feedback link. Using this information, the decoder at the BS restores the AoDs $\boldsymbol{\hat{\theta}}=[\hat{\theta}_{1},\cdots,\hat\theta_{L}]$, the time delays $\boldsymbol{\hat{\tau}}=[\hat{\tau}_{1},\cdots,\hat\tau_{L}]$, the path losses $\boldsymbol{\hat{\beta}}=[\hat{\beta}_{1},\cdots,\hat\beta_{L}]$, and the phases $\boldsymbol{\hat{\phi}}=[\hat{\phi}_{1},\cdots,\hat\phi_{L}]$ from the codeword $\mathbf{c}$,
\begin{align}
    \mathbf{\hat{P}}_{t+\Delta t}=&\{\hat{\boldsymbol{\theta}}, \hat{\boldsymbol{\tau}}, \hat{\boldsymbol{\beta}}, \hat{\boldsymbol{\phi}}\} \\
    =& f_{\text{de-qnt}}(\mathbf{c})
\end{align}
where $f_{\text{de-qnt}}$ is the mapping function of the de-quantization process.
Finally, we reconstruct the downlink channel $\mathbf{H}_{t+\Delta t}$ using the Transformer-based decoder as
\begin{align}
    \mathbf{\hat{H}}_{t+\Delta t}
    =f_{\text{dec}}(\mathbf{\hat{P}}_{t+\Delta t}, \boldsymbol{\gamma})
\end{align}
where $f_{\text{dec}}$ and $\boldsymbol{\gamma}$ are the channel reconstruction module and the network parameters of the Transformer-based decoder.

\subsection{Transformer-based Encoder with Parametric CSI Compression}
As mentioned, Transformer-based encoder outputs the parametric CSI $\widetilde{\mathbf{P}}_{t+\Delta t}$ from the sequence of historical channels (see Fig.~\ref{proposed1}).
\begin{figure*}[t] 
\centering 
\includegraphics[width=0.7\linewidth]{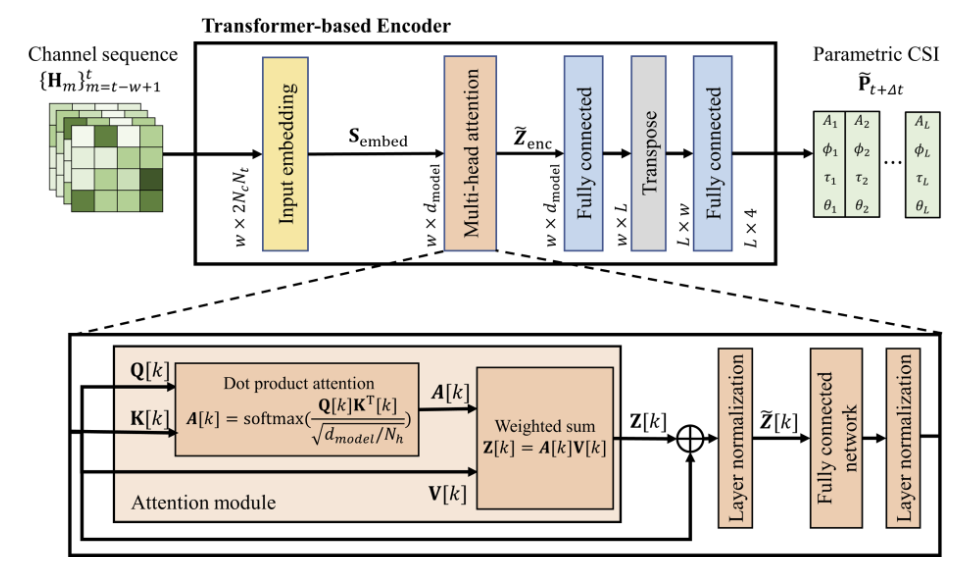}
\caption{Transformer-based encoder of the proposed COMPaCT.}
\label{proposed1}
\end{figure*}
As an input to the encoder, sequence of estimated channel matrices $\{\mathbf{H}_{t-w+1},$ $\cdots,$ $\mathbf{H}_{t}\} \in \mathbb{C}^{w\times N_f N_t}$ is used.
The input matrix $\widetilde{\mathbf{H}}_{\text{input}} \in \mathbb{R}^{w\times 2N_f N_t}$ of the Transformer-based encoder is
\begin{equation}
    \widetilde{\mathbf{H}}_{\text{input}} =
    \begin{bmatrix}
       \Re(\text{vec}(\mathbf{H}_{t-w+1}))&\cdots&\Re(\text{vec}(\mathbf{H}_{t})) \\
       \Im(\text{vec}(\mathbf{H}_{t-w+1}))&\cdots&\Im(\text{vec}(\mathbf{H}_{t})) \\
    \end{bmatrix}^{\text{T}}.
    \label{vectorize}
\end{equation}
We obtain $\mathbf{S}_{\text{embed}}$ from $\widetilde{\mathbf{H}}_{\text{input}}$ in the input embedding process, which will be delivered to the multi-head self-attention layer.
In the input embedding process, we first generate the angle-delay domain channel matrix using a 2D discrete Fourier transform (DFT) and retain only the first $\tilde{N}_c = 32$ rows.
This is because path components with long delays are significantly attenuated according to Friis' law, allowing us to disregard them~\cite{wen2018deep}. 
We then take the vectorized angle-delay domain channel matrix as the input for two fully connected networks to obtain $\mathbf{S}_{\text{embed}}$. Multi-head attention structure is in particular useful for our work since each head takes care of individual channel parameter (i.e., angle, delay, and path gain).

When the number of attention heads is $N_h$, the query $\mathbf{Q}[k] \in \mathbb{R}^{w\times d_{\text{model}}}$, key $\mathbf{K}[k] \in \mathbb{R}^{w\times d_{\text{model}}}$, and value $\mathbf{V}[k] \in \mathbb{R}^{w\times d_{\text{model}}}$ at $k$-th attention head layer ($k = 1, 2, \cdots, N_h$) are constructed from $\mathbf{S}_{\text{embed}}$ as
\begin{align}
    \mathbf{Q}[k] &= \mathbf{S}_{\text{embed}} \, \mathbf{W}^Q[k]\\ 
    \mathbf{K}[k] &= \mathbf{S}_{\text{embed}} \, \mathbf{W}^K[k]\\
    \mathbf{V}[k] &= \mathbf{S}_{\text{embed}} \, \mathbf{W}^V[k]
\end{align}
where $d_{\text{model}}$ is the embedding dimension and $\mathbf{W}^Q[k]\in\mathbb{R}^{d_{\text{model}} \times (d_{\text{model}}/N_h)}$, $\mathbf{W}^K[k]\in \mathbb{R}^{d_{\text{model}} \times (d_{\text{model}}/N_h)}$, and $\mathbf{W}^V[k]\in  \mathbb{R}^{d_{\text{model}} \times (d_{\text{model}}/N_h)}$ are the trainable matrices with linear transformation, respectively.
Since the query $\mathbf{Q}$ and key $\mathbf{K}$ contain features of the input channel sequence $\widetilde{\mathbf{H}}_{\text{input}}$, we obtain the attention map $\boldsymbol{A}[k] \in \mathbb{R}^{w\times w}$ as
\begin{align}
    \boldsymbol{A}[k] = f_{\text{softmax}}(\frac{\mathbf{Q}[k] \mathbf{K}^{\text{T}}[k]}{\sqrt{d_{\text{model}}/N_h}})
    \label{att-s}
\end{align}
where $f_{\text{softmax}}$ is the row-wise softmax function defined as $\big[f_{\text{softmax}}(\mathbf{X})\big]_{i,j} = \frac{e^{[\mathbf{X}]_{i,j}}}{\sum_j e^{[\mathbf{X}]_{i,j}}}$.
Each column in the attention map represents a probability vector whose elements are all non-negative and add up to one.

\begin{figure*}[t] 
\centering 
\includegraphics[width=0.85\linewidth]{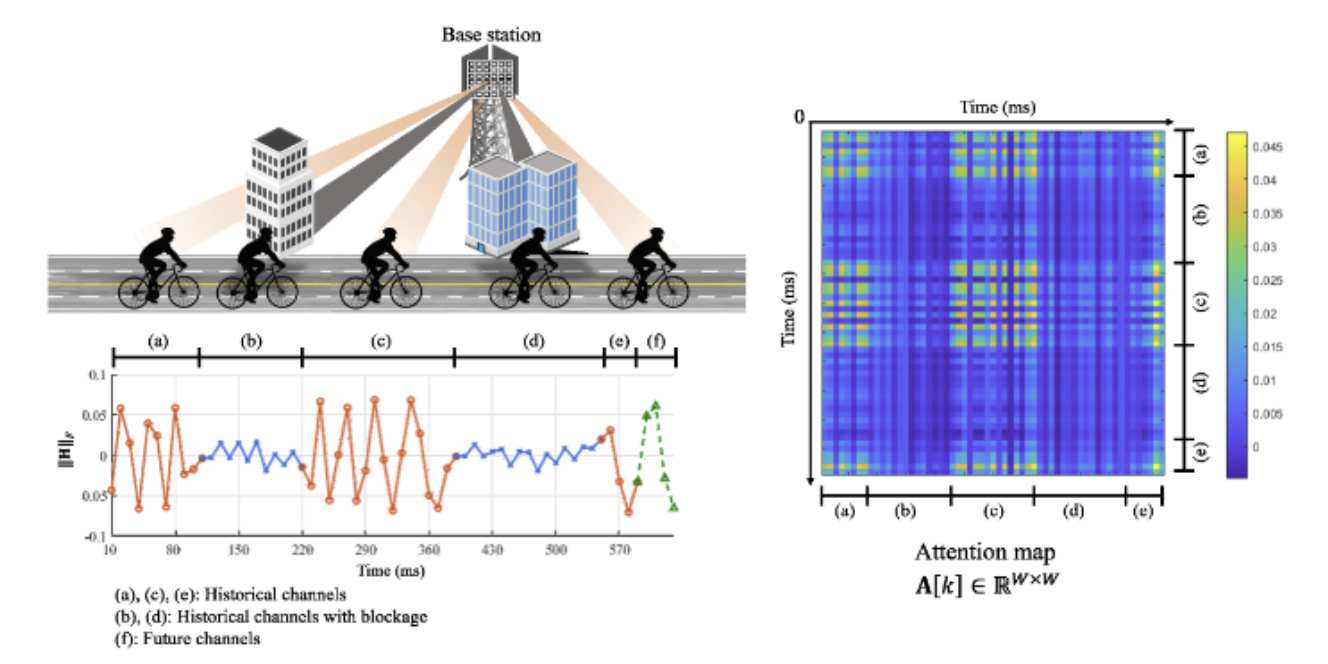}
\caption{An example of the practical scenario in mmWave massive MIMO systems; the LoS dominant parts of the channel sequence (i.e., (a), (c), and (e)) can be exploited to acquire the future channels (f) from the attention map.}
\label{channel}
\end{figure*} 

To demonstrate the effect of the attention mechanism in capturing the correlation structure of the past channels in $\{\mathbf{H}_m\}_{m=t-w+1}^{t}$, we plot the attention map of the COMPaCT encoder in Fig.~\ref{channel}.
From the attention map, one can see that the attention block automatically assigns more weights to parts of input channels (i.e., (a), (c), and (e)), which are most relevant to future channels (i.e., (f)).

Then, the feature map of the channel sequence $\boldsymbol{A}[k]$ is multiplied by the values $\mathbf{V}[k]$ on all heads $\mathbf{Z}[k] = \boldsymbol{A}[k] \mathbf{V}[k] \in \mathbb{R}^{w\times (d_{\text{model}}/N_h)}$.
By concatenating $\mathbf{Z}[k]$, we obtain the final output $\mathbf{Z}_{\text{enc}}$ of the multi-head self-attention layer:
\begin{align}
    \mathbf{Z}_{\text{enc}} = \left[\mathbf{Z}[1], \mathbf{Z}[2], \cdots, \mathbf{Z}[N_h] \right] \in \mathbb{R}^{w\times d_{\text{model}}}.
\end{align}
After this, we perform the layer normalization.
The output matrix $\widetilde{\mathbf{Z}}_{\text{enc}}$ after the layer normalization is
\begin{align}
    \widetilde{\mathbf{Z}}_{\text{enc}} = \kappa \left(\frac{\mathbf{Z}_{\text{enc}} - \mu}{\sqrt{\sigma^2+\epsilon}}\right)+\eta
\label{norm}
\end{align}
where $\mu$ and $\sigma^2$ are mean and variance, respectively. $\kappa$ and $\eta$ are learnable scaling and shifting parameters, respectively, and $\epsilon$ is a small constant for numerical stability.
Note that variation of $\mathbf{H}$ would be large since the mmWave environments suffer from severe path loss. A drastic change in gradient during the weight update process might affect the learning stability.
So, using the layer normalization, we can mitigate the fluctuation in the weight update (see Equation~\eqref{norm}).

After the layer normalization process, we use the fully connected layers to obtain the parametric CSI $\widetilde{\mathbf{P}}_{t+\Delta t}\in \mathbb{R}^{L\times 4}$, which can be expressed as
\begin{align}
    \widetilde{\mathbf{P}}_{t+\Delta t} = f_{\text{fc}}(\widetilde{\mathbf{Z}}_{\text{enc}}, \mathbf{W}_c)
\end{align}
where $f_{\text{fc}}$ is the layers that consist of fully connected networks and $\mathbf{W}_c$ is the set of network parameters in the fully connected layers.
Recall that the dimension $L$ is $2\sim5$ which is the number of effective paths in mmWave systems and 4 comes from the geometric parameters: AoD, time delay, path loss, and phase. 

\subsection{Transformer-based CSI Feedback and Reconstruction}
\subsubsection{Parametric Channel Quantization}
Once the parametric CSI $\widetilde{\mathbf{P}}_{t+\Delta t}$ is acquired at the Transformer-based encoder, UE quantizes and feeds back the acquired parametric CSI to the BS.
In the quantization process, we use the uniform quantization codebook:
\begin{align}
    \mathcal{C}_{\theta} &= \bigg\{\frac{2\pi q}{2^{Q_{\theta}}}\mid q=0,1,\cdots,2^{Q_{\theta}}-1\bigg\} \\
    \mathcal{C}_{\tau} &= \big\{\frac{\tau_{\text{max}}q}{2^{Q_{\tau}}}\mid q=0,1,\cdots,2^{Q_{\tau}}-1\big\}  \\
    \mathcal{C}_{\beta} &= \bigg\{\frac{\beta_{\text{max}}q}{2^{Q_{\beta}}}\mid q=0,1,\cdots,2^{Q_{\beta}}-1\bigg\} \\
    \mathcal{C}_{\phi} &= \bigg\{\frac{2\pi q}{2^{Q_{\phi}}}\mid q = 0, 1, \cdots, 2^{Q_{\phi}}-1\bigg\},
\end{align}
where $Q_{\theta}$, $Q_{\tau}$, $Q_{\beta}$, and $Q_{\phi}$ are the numbers of quantization bits for AoDs, time delays, path losses, and phases, respectively.
Let $\mathcal{C}_{\Bar{\boldsymbol{\theta}}}=\prod_{i=1}^L \mathcal{C}_{\theta}$ be the codebook of AoDs, then the chosen codeword is
\begin{align}
    \mathbf{c}_{\hat{q}_{\theta}} = \arg \min_{\mathbf{c}_{\theta} \in \mathcal{C}_{\Bar{\boldsymbol{\theta}}}} \Vert \boldsymbol{\theta}_{t+\Delta t}-\mathbf{c}_{\theta} \Vert^2.
\end{align}
Similarly, the UE also quantizes time delays $\boldsymbol{\tau}_{t+\Delta t}$, path losses $\boldsymbol{\beta}_{t+\Delta t}$, and phases $\boldsymbol{\phi}_{t+\Delta t}$ and then feeds back the indices of the chosen codeword $\mathbf{c}=\{\mathbf{c}_{\hat{q}_{\theta}},\mathbf{c}_{\hat{q}_{\tau}}, \mathbf{c}_{\hat{q}_{\beta}}, \mathbf{c}_{\hat{q}_{\phi}}\}$ to the BS.

\subsubsection{Transformer-based Channel Reconstruction at BS}
After receiving the quantized codeword $\mathbf{c}$, the decoder at the BS restores the AoDs $\boldsymbol{\hat{\theta}}=[\hat{\theta}_{1},\cdots,\hat\theta_{L}]$, the time delays $\boldsymbol{\hat{\tau}}=[\hat{\tau}_{1},\cdots,\hat\tau_{L}]$, the path losses $\boldsymbol{\hat{\beta}}=[\hat{\beta}_{1},\cdots,\hat\beta_{L}]$, and the phases $\boldsymbol{\hat{\phi}}=[\hat{\phi}_{1},\cdots,\hat\phi_{L}]$ from the codeword $\mathbf{c}$:
\begin{align}
    \mathbf{\hat{P}}_{t+\Delta t} &= \left\{\hat{\boldsymbol{\theta}}, \hat{\boldsymbol{\tau}}, \hat{\boldsymbol{\beta}}, \hat{\boldsymbol{\phi}}\right\} \\
    &= f_{\text{de-qnt}}(\mathbf{c}).
\end{align}
Recall that the mmWave MIMO channel $\hat{\mathbf{H}}_{t+\Delta t}$ is reconstructed via the Transformer-based decoder and parametric CSI $\hat{\mathbf{P}}_{t+\Delta t}$,
\begin{align}
    \mathbf{\hat{H}}_{t+\Delta t}
    =f_{\text{dec}}(\mathbf{\hat{P}}_{t+\Delta t}, \boldsymbol{\gamma}).
\end{align}
The decoder architecture is similar to the Transformer-based encoder in UE, except for the leaky-ReLU and 2D convolution layer at the end.
Specifically, $\hat{\mathbf{P}}_{t+\Delta t}$ initially passes through Transformer-based decoder to obtain the output vector $\mathbf{Z}_{\text{dec}}$.
By extracting the correlated features of the parametric CSI $\hat{\mathbf{P}}_{t+\Delta t}$, the decoder reconstructs the original channel matrix $\hat{\mathbf{H}}_{t+\Delta t}$.
Then, the output vector $\mathbf{Z}_{\text{dec}}$ passes through the leaky-ReLU layer to impose non-linearity to the transformed hidden units:
\begin{align}
   \check{\mathbf{Z}}_{\text{dec}} = f_{\text{leaky-ReLU}}(\mathbf{Z}_{\text{dec}})
\end{align}
where $f_{\text{leaky-ReLU}}(x) = \text{max}(0.1x, x)$ is the leaky-ReLU function. 
Finally, we use the 2D convolution layer to map the extracted channel features onto the channel matrix $\hat{\mathbf{H}}_{t+\Delta t}\in \mathbb{R}^{2\times N_f \times N_t}$, which is
\begin{align}
   \hat{\mathbf{H}}_{t+\Delta t} = f_{\text{2D-conv}}(\check{\mathbf{Z}}_{\text{dec}}, \mathbf{W}_{\text{2D-conv}})
\end{align} 
where $\mathbf{W}_{\text{2D-conv}}$ denotes the network parameters of the 2D convolution layer.

\section{Parameter-selective Feedback Bit Allocation}

Due to the quantization process, mismatch between the true channel parameters $\mathbf{P}$ and the quantized channel parameters $\hat{\mathbf{P}}$ is unavoidable.
One notable observation is that the impact of quantization on channel reconstruction performance varies for each parameter. 
To demonstrate this behavior, we measure the channel reconstruction error caused by the quantization of each channel parameter in Table~\ref{NMSE_fb}.
One can see that the channel reconstruction error caused by the angle quantization is significantly larger than that caused by the phase quantization. 
To minimize the channel quantization error in the practical limited feedback scenario, the feedback bits should be properly allocated among the channel parameters.

We now formalize the problem of minimizing the channel quantization error. 
Let $\Delta\boldsymbol{\theta}=\boldsymbol{\theta}-\hat{\boldsymbol{\theta}}$, $\Delta\boldsymbol{\tau}=\boldsymbol{\tau}-\hat{\boldsymbol{\tau}}$, $\Delta\boldsymbol{\beta}=\boldsymbol{\beta}-\hat{\boldsymbol{\beta}}$, and $\Delta\boldsymbol{\phi}=\boldsymbol{\phi}-\hat{\boldsymbol{\phi}}$ be the quantization distortion vectors of AoD, time delay, path loss, and phase, respectively. 
Since the channel parameters are quantized using the uniform quantization codebook, elements of quantization distortion vectors are also uniformly distributed. That is, 
\begin{align}
    &\Delta\theta_{l}\sim\text{Unif}\Big[-\frac{\pi}{2^{Q_{\theta}}},\frac{\pi} {2^{Q_{\theta}}}\Big]\label{d_theta} \\
    &\Delta\tau_{l}\sim\text{Unif}\Big[-\frac{\tau_{\mathrm{max}}}{2^{Q_{\tau}+1}},\frac{\tau_{\mathrm{max}}}{2^{Q_{\tau}+1}}\Big] \label{d_tau}\\
    &\Delta\beta_{l}\sim\text{Unif}\big[-\frac{\beta_{\mathrm{max}}}{2^{Q_{\beta}+1}},\frac{\beta_{\mathrm{max}}}{2^{Q_{\beta}+1}}\Big] \label{d_beta} \\
    &\Delta\phi_{l}\sim\text{Unif}\Big[-\frac{\pi}{2^{Q_{\phi}}}\frac{\pi}{2^{Q_{\phi}}}\Big].\label{d_phi} 
\end{align}
Also, let $\Delta\mathbf{H}=\mathbf{H}-\hat{\mathbf{H}}$ be the channel quantization distortion matrix. 
Then, the feedback bit allocation problem $\mathcal{P}$ to find out the optimal feedback bits $Q_{\theta}$, $Q_{\tau}$, $Q_{\beta}$, and $Q_{\phi}$ minimizing the channel quantization error $\mathbb{E}\big[\lVert\Delta\mathbf{H}\rVert^{2}\big]$ is expressed as
\begin{subequations}
    \begin{align}
    \mathcal{P}:\min_{Q_{\theta},Q_{\tau},Q_{\beta},Q_{\phi}}\,&\mathbb{E}\big[\lVert\Delta\mathbf{H}\rVert_{\mathrm{F}}^{2}\big]\\
        \text{s.t.}\quad\quad& Q_{\theta}+Q_{\tau}+Q_{\beta}+Q_{\phi}=Q
    \end{align}
\end{subequations}
where $Q$ is the total number of feedback bits. 
Note that $\mathcal{P}$ is a non-convex combinatorial optimization problem.
When the brute-force approach is applied to solve $\mathcal{P}$, it requires the evaluation of all possible $\big({Q\choose 4}\big)={Q+3\choose 4}$ combinations of $(Q_{\theta},Q_{\tau},Q_{\beta},Q_{\phi})$ where $\big({n\choose k}\big)$ denotes the $k$-combination with repetition from a set of size $n$.
For example, when $Q=20$, the total number of possible combinations is ${23\choose 4}=8855$.

To find a tractable solution of $\mathcal{P}$, we investigate the impact of quantization of each channel parameter on $\mathbb{E}\big[\lVert\Delta\mathbf{H}\rVert_{\mathrm{F}}^{2}\big]$.
Then we find out the optimal feedback bit allocation to minimize the channel quantization error.
In the following lemma, we approximate $\Delta \mathbf{H}$ as a function of $\Delta \boldsymbol{\theta}$, $\Delta \boldsymbol{\tau}$, $\Delta\boldsymbol{\beta}$, and $\Delta \boldsymbol{\phi}$.
\begin{lemma} \label{lem:1}
The channel quantization distortion $\Delta\mathbf{H}=\mathbf{H}-\hat{\mathbf{H}}=\big[\Delta\mathbf{h}[1]\cdots\Delta\mathbf{h}[N_{f}]\big]^{\textup{H}}$ between the true channel matrix $\mathbf{H}$ and the reconstructed channel matrix $\hat{\mathbf{H}}$ can be expressed as a function of the quantization distortions vectors $\Delta\boldsymbol{\theta}$, $\Delta\boldsymbol{\tau}$, $\Delta\boldsymbol{\beta}$, and $\Delta\boldsymbol{\phi}$:
\begin{align}
    \Delta\mathbf{h}[s]\approx&(\nabla_{\boldsymbol{\theta}}\mathbf{h}[s])\Delta\boldsymbol{\theta}+(\nabla_{\boldsymbol{\tau}}\mathbf{h}[s])\Delta\boldsymbol{\tau}+(\nabla_{\boldsymbol{\beta}}\mathbf{h}[s])\Delta\boldsymbol{\beta} \nonumber \\ &+(\nabla_{\boldsymbol{\phi}}\mathbf{h}[s])\Delta\boldsymbol{\phi}
\end{align}
where 
\begin{align}
    (\nabla_{\boldsymbol{\theta}}\mathbf{h}[s])\Delta\boldsymbol{\theta} =& \big(\mathbf{R}_{\theta}\odot\mathbf{A}_t(\boldsymbol{\theta})\big)\mathrm{diag}(e^{j(\boldsymbol{\phi}-2\pi f_{s}\boldsymbol{\tau})})\boldsymbol{\beta} \\
    (\nabla_{\boldsymbol{\tau}}\mathbf{h}[s])\Delta\boldsymbol{\tau} =& (\mathbf{R}_{\tau}[s]\odot\mathbf{A}_t(\boldsymbol{\theta}))\mathrm{diag}(e^{j(\boldsymbol{\phi}-2\pi f_{s}\boldsymbol{\tau})})\boldsymbol{\beta}\\
    (\nabla_{\boldsymbol{\beta}}\mathbf{h}[s])\Delta\boldsymbol{\beta} =& \mathbf{A}_t(\boldsymbol{\theta})\mathrm{diag}(e^{j(\boldsymbol{\phi}-2\pi f_{s}\boldsymbol{\tau})})\Delta\boldsymbol{\beta}\\
    (\nabla_{\boldsymbol{\phi}}\mathbf{h}[s])\Delta\boldsymbol{\phi} =& (\mathbf{R}_{\phi}\odot\mathbf{A}_t(\boldsymbol{\theta}))\mathrm{diag}(e^{j(\boldsymbol{\phi}-2\pi f_{s}\boldsymbol{\tau})})\mathrm{diag}(\boldsymbol{\beta}).
\end{align}
Also, $\mathbf{R}_{\theta}\in\mathbb{C}^{N_{t}\times L}$, $\mathbf{R}_{\tau}[s]\in\mathbb{C}^{N_{t}\times L}$, and $\mathbf{R}_{\phi}\in\mathbb{C}^{N_{t}\times L}$ are coefficient matrices given by
\begin{align}
    \mathbf{R}_{\theta}&=-j\frac{2\pi d}{\lambda}\big[\cos{\theta_{1}}\Delta\theta_{1}\mathbf{n}_{N_{t}},\cdots,\cos{\theta_{L}}\Delta\theta_{L}\mathbf{n}_{N_{t}}\big] \label{R_theta}\\
    \mathbf{R}_{\tau}[s]&=-j2\pi f_{s}[\Delta\tau_{1}\mathbf{1}_{N},\cdots,\Delta\tau_{L}\mathbf{1}_{N}]\\
    \mathbf{R}_{\phi}&=j[\Delta\phi_{1}\mathbf{1}_{N},\cdots,\Delta\phi_{L}\mathbf{1}_{N}].
\end{align}
where $\mathbf{n}_{N_{t}}=[0,1,\cdots,N_{t}-1]$.
\end{lemma}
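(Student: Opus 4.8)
The plan is to carry out a first-order Taylor expansion of the frequency-domain channel vector $\mathbf{h}[s]$ about the true parameter point $(\boldsymbol{\theta},\boldsymbol{\tau},\boldsymbol{\beta},\boldsymbol{\phi})$, treating the quantized parameters as small perturbations. Starting from the parametric expression in \eqref{freq_channel}, namely $\mathbf{h}[s]=\sum_{l=1}^{L}\beta_l e^{j\phi_l}e^{-j2\pi f_s\tau_l}\mathbf{a}_t(\theta_l)$, I would write $\Delta\mathbf{h}[s]=\mathbf{h}[s](\boldsymbol{\theta},\dots)-\mathbf{h}[s](\hat{\boldsymbol{\theta}},\dots)$ and expand to first order, discarding the $O(\|\Delta\cdot\|^2)$ terms — this is justified because the quantization cell sizes in \eqref{d_theta}--\eqref{d_phi} shrink geometrically in the number of bits. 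This yields the four separate contributions $(\nabla_{\boldsymbol{\theta}}\mathbf{h}[s])\Delta\boldsymbol{\theta}$, $(\nabla_{\boldsymbol{\tau}}\mathbf{h}[s])\Delta\boldsymbol{\tau}$, $(\nabla_{\boldsymbol{\beta}}\mathbf{h}[s])\Delta\boldsymbol{\beta}$, $(\nabla_{\boldsymbol{\phi}}\mathbf{h}[s])\Delta\boldsymbol{\phi}$, one per parameter group.

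Next I would compute each partial derivative term-by-term over the $L$ paths and then reassemble into matrix form. The $\boldsymbol{\beta}$ and $\boldsymbol{\phi}$ cases are the easiest: differentiating $\beta_l e^{j\phi_l}e^{-j2\pi f_s\tau_l}\mathbf{a}_t(\theta_l)$ with respect to $\beta_l$ just removes the $\beta_l$ factor, giving $\mathbf{A}_t(\boldsymbol{\theta})\,\mathrm{diag}(e^{j(\boldsymbol{\phi}-2\pi f_s\boldsymbol{\tau})})\Delta\boldsymbol{\beta}$; with respect to $\phi_l$ it brings down a factor $j$, which is exactly why $\mathbf{R}_\phi$ in the statement carries the $j$ and the per-path $\Delta\phi_l$. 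For the $\boldsymbol{\tau}$ term, differentiating the $e^{-j2\pi f_s\tau_l}$ factor brings down $-j2\pi f_s$, independent of the antenna index, hence the all-ones vectors $\mathbf{1}_N$ in $\mathbf{R}_\tau[s]$. The $\boldsymbol{\theta}$ term is the only one touching the steering vector: $\partial_{\theta_l}\mathbf{a}_t(\theta_l)$ pulls down, in the $n$-th entry, a factor $-j\frac{2\pi d}{\lambda}n\cos\theta_l$, which is precisely the structure encoded by $\mathbf{n}_{N_t}=[0,1,\dots,N_t-1]$ and the $\cos\theta_l$ in \eqref{R_theta}. In every case the residual $\mathrm{diag}(e^{j(\boldsymbol{\phi}-2\pi f_s\boldsymbol{\tau})})\boldsymbol{\beta}$ or $\mathrm{diag}(\boldsymbol{\beta})$ factor just collects the untouched parts of each summand, and the Hadamard product $\odot$ with $\mathbf{A}_t(\boldsymbol{\theta})$ implements the per-column (per-path) scaling by the derivative coefficients while keeping the steering-vector phase ramps intact.

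The main bookkeeping obstacle — not a conceptual one — is matching my naive column-sum-over-paths expression to the compact Hadamard/diag form in the statement: one must verify that ``multiply column $l$ of $\mathbf{A}_t(\boldsymbol{\theta})$ elementwise by $\mathbf{R}_\bullet$'s column $l$, then right-multiply by $\mathrm{diag}(\cdot)\boldsymbol{\beta}$'' really does reproduce $\sum_l (\text{coeff}_l)\,\beta_l e^{j(\phi_l-2\pi f_s\tau_l)}\,\mathbf{a}_t(\theta_l)$, and that the coefficient vectors $\Delta\boldsymbol{\theta},\Delta\boldsymbol{\tau},\Delta\boldsymbol{\phi}$ have been folded into $\mathbf{R}_\theta,\mathbf{R}_\tau[s],\mathbf{R}_\phi$ rather than left as separate right-multiplied vectors (note the asymmetry: for $\boldsymbol{\beta}$ the perturbation $\Delta\boldsymbol{\beta}$ stays outside, for the other three it is absorbed into the $\mathbf{R}$ matrices). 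I would also flag the apparent sign convention in the phase exponent — the statement writes $e^{j(\boldsymbol{\phi}-2\pi f_s\boldsymbol{\tau})}$ whereas \eqref{freq_channel} has $e^{j\phi_l}e^{-j2\pi f_s\tau_l}$, so these agree — and simply carry the convention of the lemma throughout. Once the derivative coefficients are computed path-by-path, collecting them into these matrices is purely mechanical, so I would present the $\boldsymbol{\beta}$ and $\boldsymbol{\phi}$ derivations in full and abbreviate the $\boldsymbol{\theta}$ and $\boldsymbol{\tau}$ ones by analogy.
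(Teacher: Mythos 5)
Your proposal is correct and follows essentially the same route as the paper's Appendix A: a first-order Taylor expansion of $\mathbf{h}[s]$ in the four parameter groups, per-path differentiation (including the key computation $\partial_{\theta_l}\mathbf{a}_t(\theta_l)=-j\tfrac{2\pi d\cos\theta_l}{\lambda}(\mathbf{n}_{N_t}\odot\mathbf{a}_t(\theta_l))$), and collection of the $L$ terms into the Hadamard-product matrix form. The only cosmetic difference is which parameter group is worked out in full (the paper does $\boldsymbol{\theta}$ and abbreviates the rest; you do the reverse), which does not change the argument.
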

\begin{proof}
    See Appendix~\hyperref[app:A]{A}.
\end{proof}

Using Lemma~\ref{lem:1} and the distributions of $\Delta\boldsymbol{\theta}$, $\Delta\boldsymbol{\tau}$, $\Delta\boldsymbol{\beta}$, and $\Delta\boldsymbol{\phi}$ in (\ref{d_theta})-(\ref{d_phi}), we can express the expected channel quantization error $\mathbb{E}[\lVert\Delta\mathbf{H}\rVert_{\mathrm{F}}^{2}]$ as a function of  $Q_{\theta}, Q_{\tau}, Q_{\beta}$, and $Q_{\phi}$. 
\begin{theorem}\label{thm:1}
The expected channel quantization error $\mathbb{E}\big[\lVert\Delta\mathbf{H}\rVert_{\mathrm{F}}^{2}\big]$ is expressed as the sum of angle-based channel quantization term $C_{\theta}$, time delay-based channel quantization term $C_{\tau}$, path loss-based channel quantization term $C_{\beta}$, and phase-based channel quantization term $C_{\phi}$ as
\begin{equation}
    \mathbb{E}\big[\lVert\Delta\mathbf{H}\rVert_{\textup{F}}^{2}\big] \approx C_{\theta}+C_{\tau}+C_{\beta}+C_{\phi}
\end{equation}
where
\begin{align}
    C_{\theta} &=\frac{\pi^{4}d^{2}LN_{f}N_{t}(N_{t}-1)\beta_{\textup{max}}^{2}}{36\cdot 2^{2Q_{\theta}}\lambda^{2}} \label{distort-angle}\\
    C_{\tau} &= \frac{\pi^{2}LN_{t}\tau_{\textup{max}}^{2}\beta_{\textup{max}}^{2}}{9\cdot 2^{2(Q_{\tau}+1)}}\sum_{s=1}^{N_{f}}f_{s}^{2} \label{distort-t} \\
    C_{\beta} &=\frac{LN_{f}N_{t}\beta_{\textup{max}}^{2}}{3\cdot 2^{2(Q_{\beta}+1)}} \label{distort-a}\\
    C_{\phi} &= \frac{\pi^{2}LN_{f}N_{t}\beta_{\textup{max}}^{2}}{36\cdot 2^{2Q_{\phi}}}.
    \label{distort-p}
\end{align}
\end{theorem}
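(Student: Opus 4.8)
The plan is to start from the elementary identity $\lVert\Delta\mathbf{H}\rVert_{\mathrm{F}}^{2}=\sum_{s=1}^{N_{f}}\lVert\Delta\mathbf{h}[s]\rVert_{2}^{2}$ (which follows from $\Delta\mathbf{H}=\big[\Delta\mathbf{h}[1]\cdots\Delta\mathbf{h}[N_{f}]\big]^{\mathrm{H}}$) and to substitute the first-order expansion of $\Delta\mathbf{h}[s]$ from Lemma~\ref{lem:1}. Abbreviating $\mathbf{g}_{\theta}[s]=(\nabla_{\boldsymbol{\theta}}\mathbf{h}[s])\Delta\boldsymbol{\theta}$ and likewise $\mathbf{g}_{\tau}[s],\mathbf{g}_{\beta}[s],\mathbf{g}_{\phi}[s]$, we get $\lVert\Delta\mathbf{h}[s]\rVert_{2}^{2}\approx\sum_{x}\lVert\mathbf{g}_{x}[s]\rVert_{2}^{2}+\sum_{x\neq y}\Re\!\big(\mathbf{g}_{x}[s]^{\mathrm{H}}\mathbf{g}_{y}[s]\big)$, so the task splits into (i) showing that all cross terms vanish under $\mathbb{E}[\cdot]$ and (ii) evaluating the four diagonal expectations explicitly.

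For (i), I would take the expectation over both the quantization errors and the random channel parameters $\theta_{l},\tau_{l},\beta_{l},\phi_{l}$. Reading off Lemma~\ref{lem:1}, each $\mathbf{g}_{x}[s]$ is a sum over the $L$ paths of a term proportional to $\beta_{l}e^{j(\phi_{l}-2\pi f_{s}\tau_{l})}$ times one quantization error ($\Delta\theta_{l}$, $\Delta\tau_{l}$, $\Delta\beta_{l}$, or $\Delta\phi_{l}$) times a deterministic steering-type vector (either $\mathbf{a}_{t}(\theta_{l})$ or $\mathbf{n}_{N_{t}}\odot\mathbf{a}_{t}(\theta_{l})$). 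Because the path phases $\phi_{l}\sim\text{Unif}[0,2\pi)$ are i.i.d.\ and independent of everything else, any product of contributions from two \emph{distinct} paths carries a factor $e^{j(\phi_{l'}-\phi_{l})}$ and averages to zero; within a \emph{single} path, the phase modulus cancels and a cross term leaves a factor $\Delta x_{l}\,\Delta y_{l}$ with $x\neq y$, which averages to zero since the four quantization errors are mutually independent and zero-mean by \eqref{d_theta}--\eqref{d_phi}. Hence $\mathbb{E}[\lVert\Delta\mathbf{h}[s]\rVert_{2}^{2}]\approx\sum_{x}\mathbb{E}[\lVert\mathbf{g}_{x}[s]\rVert_{2}^{2}]$, and by the same distinct-path argument each diagonal term collapses to $\mathbb{E}[\lVert\mathbf{g}_{x}[s]\rVert_{2}^{2}]=\sum_{l=1}^{L}\mathbb{E}\big[\lVert(\text{path-}l\text{ term})\rVert_{2}^{2}\big]$.

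For (ii), I would evaluate a single representative path and multiply by $L$ using the i.i.d.\ assumption on the paths. Using $\lVert\mathbf{a}_{t}(\theta_{l})\rVert_{2}^{2}=N_{t}$ (unit-modulus entries), the path-loss term gives $\mathbb{E}[\lVert\mathbf{g}_{\beta}[s]\rVert_{2}^{2}]=LN_{t}\,\mathbb{E}[\Delta\beta_{l}^{2}]$ with $\mathbb{E}[\Delta\beta_{l}^{2}]$ read from \eqref{d_beta}; the phase term additionally carries the factor $\beta_{l}^{2}$, the delay term carries $\beta_{l}^{2}$ and $4\pi^{2}f_{s}^{2}$, and the angle term carries $\beta_{l}^{2}$, $\big(\tfrac{2\pi d}{\lambda}\big)^{2}\cos^{2}\theta_{l}$, and the index weights $\mathbf{n}_{N_{t}}$, so its unit vector is replaced by $\mathbf{n}_{N_{t}}\odot\mathbf{a}_{t}(\theta_{l})$ of squared norm $\sum_{k=0}^{N_{t}-1}k^{2}$. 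Plugging in the second moments $\mathbb{E}[\Delta\theta_{l}^{2}]$, $\mathbb{E}[\Delta\tau_{l}^{2}]$, $\mathbb{E}[\Delta\phi_{l}^{2}]$ from \eqref{d_theta}--\eqref{d_phi}, the relevant moment of $\beta_{l}\sim\text{Unif}[0,\beta_{\mathrm{max}}]$, and $\mathbb{E}[\cos^{2}\theta_{l}]=\tfrac12$ for $\theta_{l}\sim\text{Unif}[0,2\pi)$, and finally summing over the $N_{f}$ subcarriers (a plain factor $N_{f}$ for the angle, path-loss, and phase terms, and $\sum_{s}f_{s}^{2}$ for the delay term), produces $C_{\theta},C_{\tau},C_{\beta},C_{\phi}$ exactly as in \eqref{distort-angle}--\eqref{distort-p}.

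I expect the main obstacle to be part (i): making the vanishing of the cross terms airtight requires being explicit about which randomness the expectation is taken over---crucially leaning on the uniformly random path phases $\phi_{l}$ to decorrelate distinct paths and on the mutual independence and zero-mean property of the quantization errors within a path---and noting that the linearization in Lemma~\ref{lem:1} already carries an $O(\lVert\Delta\mathbf{x}\rVert^{2})$ error, which is why the final statement is necessarily an approximation. A secondary, purely bookkeeping hurdle is the angle term, where differentiating the array steering vector injects the index weights $\mathbf{n}_{N_{t}}$; matching the stated $N_{t}(N_{t}-1)$ scaling requires evaluating $\sum_{k=0}^{N_{t}-1}k^{2}$ and carefully tracking the $\cos^{2}\theta_{l}$ and numerical constants through to \eqref{distort-angle}.
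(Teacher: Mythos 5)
Your strategy is the same as the paper's Appendix~B: write $\lVert\Delta\mathbf{H}\rVert_{\mathrm{F}}^{2}=\sum_{s}\lVert\Delta\mathbf{h}[s]\rVert_{2}^{2}$, substitute the linearization from Lemma~\ref{lem:1}, kill the cross terms in expectation, and evaluate the four diagonal second moments path by path. Your part~(i) is in fact more carefully argued than the paper's one-line appeal to independence; note, though, that the zero-mean, mutually independent quantization errors already annihilate \emph{every} cross term (both across blocks and across distinct paths within a block, which is exactly the $\delta_{u,v}$ step in the paper's computation of $\mathbb{E}\big[(\mathbf{R}_{\theta}\odot\mathbf{A}_{t})^{\mathrm{H}}(\mathbf{R}_{\theta}\odot\mathbf{A}_{t})\big]$), so the appeal to the random path phases $\phi_{l}$ is not needed.

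The genuine gap is in part~(ii), where your stated inputs do not reproduce the theorem's constants. You correctly identify the squared norm of the weighted steering vector as $\lVert\mathbf{n}_{N_{t}}\odot\mathbf{a}_{t}(\theta_{l})\rVert_{2}^{2}=\sum_{k=0}^{N_{t}-1}k^{2}=\tfrac{1}{6}N_{t}(N_{t}-1)(2N_{t}-1)$, but this is cubic in $N_{t}$ and cannot yield the $N_{t}(N_{t}-1)$ factor in \eqref{distort-angle}; the paper's appendix evaluates that norm as $\tfrac{1}{2}N_{t}(N_{t}-1)$ (i.e., $\sum_{k}k$ rather than $\sum_{k}k^{2}$) to arrive at \eqref{exp}. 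Similarly, the "relevant moment of $\beta_{l}$" you leave unspecified must be taken as $\mathbb{E}[\boldsymbol{\beta}\boldsymbol{\beta}^{\mathrm{H}}]=\tfrac{\beta_{\mathrm{max}}^{2}}{12}\mathbf{I}_{L}$, as the paper does, rather than the raw second moment $\mathbb{E}[\beta_{l}^{2}]=\tfrac{\beta_{\mathrm{max}}^{2}}{3}$ of a $\mathrm{Unif}[0,\beta_{\mathrm{max}}]$ variable; otherwise $C_{\theta}$, $C_{\tau}$, and $C_{\phi}$ come out a factor of four larger than \eqref{distort-angle}--\eqref{distort-p} (your $C_{\beta}$, which does not involve this moment, is fine). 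So as written your bookkeeping would land on $C_{\theta}\propto N_{t}(N_{t}-1)(2N_{t}-1)\beta_{\mathrm{max}}^{2}/3$ rather than $N_{t}(N_{t}-1)\beta_{\mathrm{max}}^{2}/12$; you need to either adopt the paper's conventions at these two points or flag the resulting discrepancy with the stated formulas.
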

\begin{proof}
    See Appendix~\hyperref[app:B]{B}.
\end{proof}
One can clearly see that the impacts of channel parameter quantization on the channel quantization error $\mathbb{E}\big[\lVert\Delta\mathbf{H}\rVert_{\textup{F}}^{2}\big]$ are different.
Thus, to minimize the channel quantization error,  we need to properly allocate feedback bits such that larger number of bits is allocated to the parameter having greater impact on the performance of channel reconstruction.

\begin{table}[t]
\centering
\caption{Channel reconstruction error caused by the quantization of different channel parameters.}
\begin{tabular}{|C{2cm}|C{1.5cm}|C{1.5cm}|C{1.5cm}|}
\hline
\textbf{} & \textbf{6 bits} & \textbf{7 bits} & \textbf{8 bits} \\ \hline
Path loss $\beta$     & -30.77        & -36.92         & -42.84        \\ \hline
Phase $\phi$    & -41.81        & -47.99        & -54.04        \\ \hline
Time delay $\tau$    & -6.76         & -12.79        & -19.49        \\ \hline
AoD $\theta$    & -5.09         & -10.50        & -16.65        \\ \hline
\end{tabular}
\label{NMSE_fb}
\end{table}

The feedback bit allocation problem $\mathcal{P}$ can be reformulated as
\begin{subequations}
    \begin{align}               \mathcal{P'}:\min_{Q_{\theta},Q_{\tau},Q_{\beta},Q_{\phi}}\,&C_{\theta}+C_{\tau}+C_{\beta}+C_{\phi} \label{ob-func}\\
        \text{s.t.}\quad\quad & Q_{\theta}+Q_{\tau}+Q_{\beta}+Q_{\phi}=Q.
    \end{align}
\end{subequations}
Using the Arithmetic-Geometric Mean Inequality\footnote{
$C_{\theta}+C_{\tau}+C_{\beta}+C_{\phi}$ satisfies Arithmetic-Geometric Mean Inequality when $x=2^{-2Q_{\theta}}$, $y=2^{-2Q_{\tau}}$, $z=2^{-2Q_{\beta}}$, and $u=2^{-2Q_{\phi}}$, which is $\frac{ax+by+cz+du}{4} \geq \sqrt[4]{ax\cdot by\cdot cz\cdot du}$ where the equality holds in $ax=by=cz=du$.
}, quantization bits of the channel parameters $Q_{\theta}, Q_{\tau}, Q_{\beta}$, and $Q_{\phi}$ minimizing the loss function in~(\ref{ob-func}) is 
\begin{align}
    Q_{\theta} &= \frac{Q}{4}+\frac{1}{8}\log_2\Big(\frac{\pi^8 d^6 (N_t-1)^3}{3 \lambda^6 \tau_{\mathrm{max}}^2 \sum_{s=1}^{N_{f}}f^2_{s}}\Big) \label{Q_theta} \\
    Q_{\tau} &= \frac{Q}{4}+\frac{1}{8}\log_2\Big(\frac{\lambda^2 \tau^6_{\mathrm{max}} (\sum_{s=1}^{N_{f}}f^2_{s})^3}{3d^2 N_f^3(N_t-1)}\Big) \label{Q_tau} \\
    Q_{\beta} &= \frac{Q}{4}+\frac{1}{8}\log_2\Big(\frac{27\lambda^2 N_f}{\pi^8 d^2 \tau_{\mathrm{max}}^2 (\sum_{s=1}^{N_{f}}f_{s}) (N_t-1)}\Big) \label{Q_beta} \\
    Q_{\phi} &= \frac{Q}{4}+\frac{1}{8}\log_2\Big( \frac{\lambda^2 N_f}{3d^2 \tau_{\mathrm{max}}^2 (\sum_{s=1}^{N_{f}}f^2_{s})(N_t-1)}\Big). \label{Q_phi}
\end{align}

One might argue that the proposed COMPaCT requires an additional feedback overhead of transmitting the information of quantization bits (i.e., $Q_\theta$, $Q_\tau$, $Q_\beta$, and $Q_\phi$).
However, as shown from (\ref{Q_theta})-(\ref{Q_phi}), these quantization bits are not affected by the geometric channel parameters. 
This means that reporting of the parameter-selective information of quantization bits to the UE can be done occasionally (e.g., reporting via PDCCH~\cite{3gpp.38.211} during the initial access).

\section{Experimental Results}

\begin{table*}[t]
\begin{center}
\caption{System parameters and training settings.}
\captionsetup{skip=5pt, position = bottom}
\begin{tabular}{|m{5.5cm}|m{1.5cm}||m{5.5cm}|m{1.5cm}|}
\hline
\textbf{System Parameters} & \textbf{Value} & \textbf{Training Settings} & \textbf{Value} \\ \hline \hline
Number of BS Antenna $N_t$       & 64             & Train data                                     & 80,000         \\ \hline
Number of UE Antenna $N_r$       & 1              & Validation data                                & 20,000         \\ \hline
Carrier frequency $f_c$          & 28 GHz         & Test data                                      & 20,000         \\ \hline
System bandwidth $B$             & 100 MHz        & Batch size                                     & 16             \\ \hline
Number of OFDM subcarriers $N_c$ & 1024           & Embedding dimension in attention module        & 512            \\ \hline
Maximum number of path $L_{\mathrm{max}}$       & 10             & Number of attention head                       & 8              \\ \hline
Mobility of UE $v$               & 3 km/h& Initial learning rate                          & 0.002          \\ \hline
Feedback period $T_s$      & 10 ms          & Learning rate decay period                     & 50 epochs      \\ \hline
Observed number of time slots $W$ & 20             & Learning rate decay coefficient                & 0.1            \\ \hline
\end{tabular}
\label{table:2}
\end{center}
\end{table*}

\subsection{Simulation Setup}
In our simulations, we consider the mmWave massive MIMO systems where the BS and UE are equipped with $N_t=64$ antennas and a single antenna, respectively.
We generate the channel using the ray-tracing method described in 3GPP TR 38.901~\cite{3gpp.38.901}.
Specifically, the UE is located initially in a random position such that the distance from the BS is between $10\,\text{m}$ and $500\,\text{m}$.
Then UE moves in a straight line at a speed of $v = 3$ km/h, where the direction of the trajectory is sampled in a uniform distribution [0, 2$\pi$).
The height of BS and UE is $25\,\text{m}$ and $1.5\,\text{m}$, respectively.
We set the carrier frequency to $f_c = 28$ GHz, the total number of OFDM subcarriers to ${N_c} = 1024$, and the total bandwidth to $B = 100$ MHz.
Considering the severe path loss and strong directivity of the mmWave channel, we limit the maximum number of paths to $L_{\mathrm{max}}=10$.
The training, validation, and testing sets consist of $80,000$, $20,000$, and $20,000$, respectively.
The detailed system parameters and training parameters are summarized in Table~\ref{table:2}~\cite{samimi20163}.

\begin{algorithm}[t]
\caption{Training process of COMPaCT}\label{alg:cap}
\begin{algorithmic}

    \Require Vectorized $w$-step channel sequence $\widetilde{\mathbf{H}}_{\text{input}}$    
\For {$\text{iter}=1$ to $N_{\text{iteration}}$}
    \State Obtain the input matrix $\widetilde{\mathbf{H}}_{\text{input}}$ of the encoder by \eqref{vectorize}
    \State Predict the parametric channel $\widetilde{\mathbf{P}}_{t+\Delta t} \gets f_{\text{enc}}(\widetilde{\mathbf{H}}_{\text{input}}, \boldsymbol{\delta})$
    \State Quantize the codeword $\mathbf{c} \gets f_{\text{qnt}}(\widetilde{\mathbf{P}}_{t+\Delta t})$
    \State De-quantize the parametric channel $\mathbf{\hat{P}}_{t+\Delta t} \gets \{\hat{\boldsymbol{\theta}}, \hat{\boldsymbol{\tau}}, \hat{\boldsymbol{\beta}}, \hat{\boldsymbol{\phi}}\} \gets f_{\text{de-qnt}}(\mathbf{c})$
    \State Recover the channel matrix $\mathbf{\hat{H}}_{t+\Delta t}
        \gets f_{\text{dec}}(\mathbf{\hat{P}}_{t+\Delta t}, \boldsymbol{\gamma})$

    \State Update the model parameters $\boldsymbol{\delta}, \boldsymbol{\gamma}$
    \State $\boldsymbol{\delta} \gets \boldsymbol{\delta} - r \nabla_{\boldsymbol{\delta}} L_{\boldsymbol{\delta}}(\mathbf{\hat{H}}_{t+\Delta t}, \mathbf{H}_{t+\Delta t})$
    \State $\boldsymbol{\gamma} \gets \boldsymbol{\gamma} - r \nabla_{\gamma} L_{\boldsymbol{\gamma}}(\mathbf{\hat{H}}_{t+\Delta t}, \mathbf{H}_{t+\Delta t})$
\EndFor
    \Ensure Predicted channel matrix in time $t+\Delta t$ $\mathbf{\hat{H}}_{t+\Delta t}$
\end{algorithmic}
\end{algorithm}

An integral part of the proposed COMPaCT is the training process optimizing the network parameters $\boldsymbol{\delta}=\left\{\mathbf{W}^Q, \mathbf{W}^K, \mathbf{W}^V, \mathbf{W}_{c}  \right\}$ at the encoder and $\boldsymbol{\gamma} = \left\{ \mathbf{W}_{\text{2D-conv}} \right\} $ at the decoder.
To train the Transformer-based encoder, we first acquire the intermediate channel $\tilde{\mathbf{H}}$ from the obtained channel parameter $\widetilde{\mathbf{P}}$ in~(\ref{parameterize}).
The loss function of the Transformer-based encoder is the NMSE between the intermediate channel $\tilde{\mathbf{H}}$ and the true channel $\mathbf{H}$, given by 
\begin{align}
    L(\boldsymbol{\delta}) = \frac{1}{N_B}\sum_{i=1}^{N_B} \frac{\Vert \tilde{\mathbf{H}}_{i} -\mathbf{H}_{i}\Vert_{\mathrm{F}}^2}{\Vert \mathbf{H}_{i}\Vert_{\mathrm{F}}^2}
    \label{loss_NMSE}
\end{align}
where $N_B$ is the batch size of the training dataset.
Note that the genie channel $\mathbf{H}$ is obtained by the geometric MIMO channel model in (\ref{channel}).
The loss function of the Transformer-based decoder is NMSE between the reconstructed channel at the decoder $\hat{\mathbf{H}}$ and the true channel $\mathbf{H}$, expressed as
\begin{align}
    L(\boldsymbol{\gamma}) = \frac{1}{N_B}\sum_{i=1}^{N_B} \frac{\Vert \hat{\mathbf{H}}_{i} -\mathbf{H}_{i}\Vert_{\mathrm{F}}^2}{\Vert \mathbf{H}_{i}\Vert_{\mathrm{F}}^2}.
\end{align}
Using these loss functions, we update the network parameters to find the optimal network parameters $\boldsymbol{\delta^{*}}$ and $\boldsymbol{\gamma^{*}}$ characterizing the mapping function $f_{\text{enc}}$ and $f_{\text{dec}}$, respectively.
The overall training process of COMPaCT is summarized in Algorithm~\ref{alg:cap}.

We compare the channel feedback performance of the proposed COMPaCT with eight baseline channel feedback techniques: 
\begin{enumerate}
    \item $\textit{LASSO}$~\cite{daubechies2004iterative}, a well-known compressive sensing (CS)-based approach that compresses the channel matrix into a sparse matrix using an L1-regularizer.
    \item $\textit{CsiNet}$~\cite{wen2018deep}, which uses a CNN-based autoencoder to compress and restore the channel matrix.
    \item $\textit{TransNet}$~\cite{cui2022transnet}, which exploits Transformer in the autoencoder to compress and recover the channel matrix.
    \item $\textit{DLRec}$~\cite{han2019tracking}, where the UE estimates and feeds back downlink path losses and phases via least squares (LS). Note that the path angles and time delays are acquired at BS using the angular-delay reciprocity.
    \item $\textit{CVNN}$~\cite{wu2021channel}, which extracts the significant elements in the angle-delay domain channel response matrix (ADCRM) to predict the ADCRM of the downlink channel.
    \item $\textit{Pruning}$~\cite{mashhadi2021pruning}, which estimates the channel using a specially designed pilot signal and non-local attention module.
    \item $\textit{Parallel Transformer-based CSI network (PTNet)}$~\cite{jiang2022accurate}, which processes the previous channel data in parallel using Transformer to predict future channels.
    \item $\textit{CsiNet-LSTM}$~\cite{wang2018deep}, which utilizes the LSTM architecture to predict the future channels.
\end{enumerate}
In the quantization process, we use the uniform quantizer exclusively.

\subsection{Simulation Results}

\begin{figure}[t]
    \centering 
    \includegraphics[width=1\linewidth]{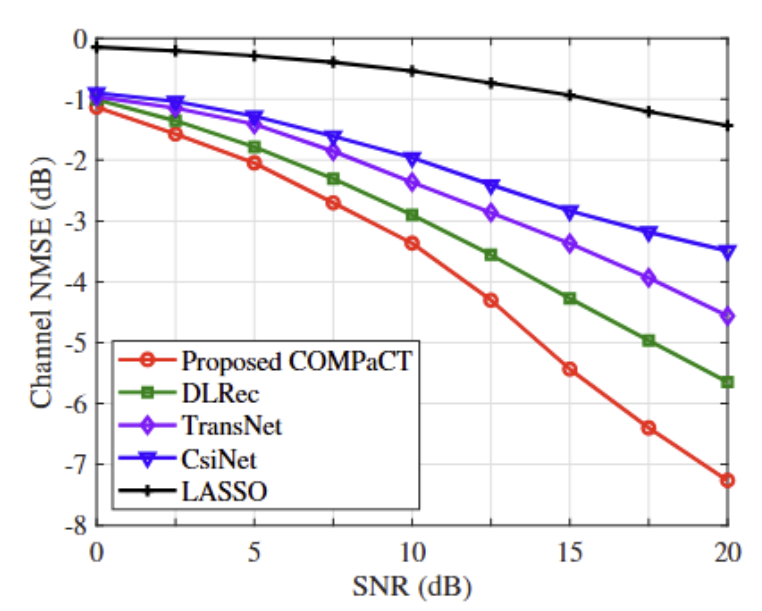}
    \caption{Channel NMSE without quantization as a function of SNR (dB), $v=3\,$km/h.}
    \label{result1}
\end{figure}

In Fig.~\ref{result1}, we evaluate the performance of the CSI feedback without quantization as a function of SNR.
As a performance metric, we use the channel NMSE defined in \eqref{loss_NMSE}.
We observe that the proposed COMPaCT outperforms the conventional channel feedback techniques by a large margin, especially at high SNR regime.
For example, when SNR $=20\,\text{dB}$, COMPaCT achieves a reduction of more than 3.5$\,$dB in NMSE over CsiNet.
This is because Transformer can capture both short- and long-term temporal correlations using the attention mechanism while CNN struggles with extracting long-term correlations.
Even when compared to DLRec, the NMSE gain of COMPaCT is more than 1.5$\,$dB. 
This is because COMPaCT can fastly estimate changing complex gain (i.e., $\alpha_{l}=\beta_{l} e^{j \phi_{l}}$) via Transformer-based encoder.

\begin{figure}[t]
    \centering 
    \includegraphics[width=1\linewidth]{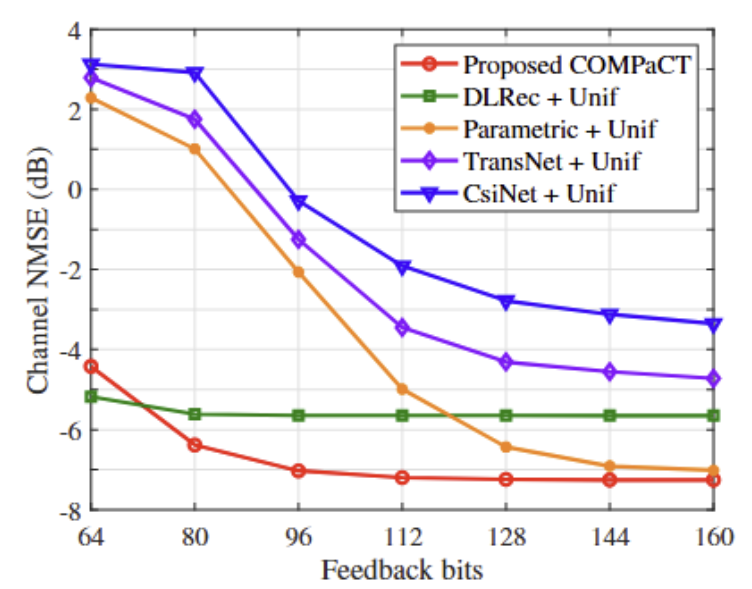}
    \caption{Channel NMSE with quantization as a function of feedback bits, $v=3\,$km/h, SNR = $20\,$dB.}
    \label{result3}
\end{figure}
In Fig.~\ref{result3}, we evaluate the CSI feedback performance with quantization in NMSE as a function of total feedback bits $Q$. 
In the conventional schemes, we use the uniform bit allocation.
We observe that COMPaCT outperforms conventional schemes in all feedback bit regimes by a significant margin, in particular the NMSE gain of COMPaCT is phenomenal for small $Q$.
For example, when the total number of feedback bits $Q$ is 96, COMPaCT exhibits 7$\,$dB and 5.5$\,$dB lower in NMSE compared to CsiNet and TransNet, respectively.
In order to further investigate the efficacy of the parameter selective bit allocation method, we compare the proposed COMPaCT with the parametric CSI feedback technique based on uniform bit allocation. We observe that COMPaCT achieves more than 1.6$\,$dB NMSE gain on average over the parametric CSI feedback with uniform bit allocation. This is because COMPaCT minimizes the distortion of the channel NMSE by the parameter selective bit allocation.

\begin{figure}[t]
    \centering 
    \includegraphics[width=1\linewidth]{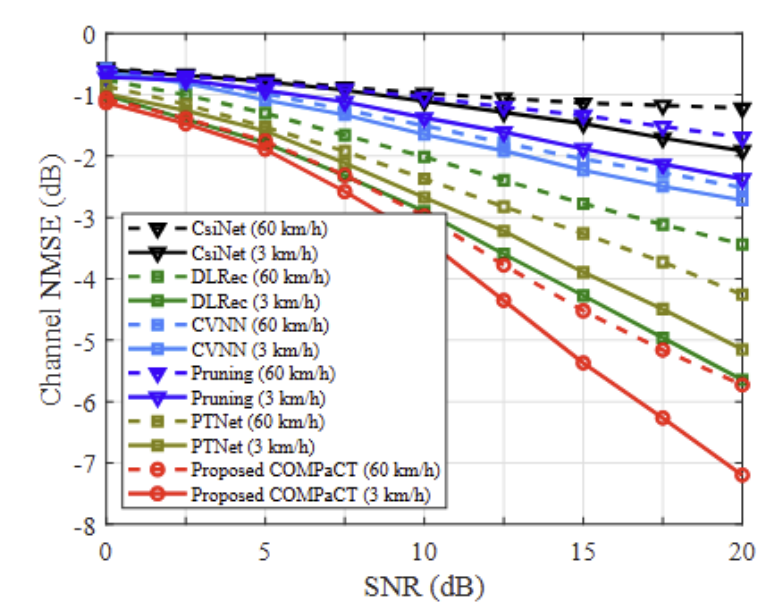}
    \caption{Channel NMSE with quantization as a function of SNR, in various UE mobility scenarios, $Q$ = 112.}
    \label{result4}
\end{figure}

In Fig.~\ref{result4}, we evaluate the NMSE performance of COMPaCT in two UE mobility scenarios ($v = \{3, 60\}$$\,$km/h).
We observe that COMPaCT shows the best (lowest) NMSE in various mobility scenarios.
For example, COMPaCT achieves a 3$\,$dB reduction in channel NMSE compared to other conventional methods since COMPaCT can capture the variation of the channel parameters from the previous channel sequence.
We also observe that COMPaCT is robust to UE mobility. We note that the angular-delay reciprocity does not hold for non-stationary scenario (i.e., $v=\,$60$\,$km/h), so that DLRec suffers from NMSE degradation caused by the channel aging effect. 
On the other hand, COMPaCT estimates the geometric channel parameters at the desired time instant by extracting the correlated features from both temporally adjacent and distant channels. 
Indeed, compared to the stationary scenario (i.e., $v=$ 3 km/h), COMPaCT achieves a higher NMSE gain over DLRec in non-stationary scenario. 

\begin{figure}[t]
    \centering 
    \includegraphics[width=1\linewidth]{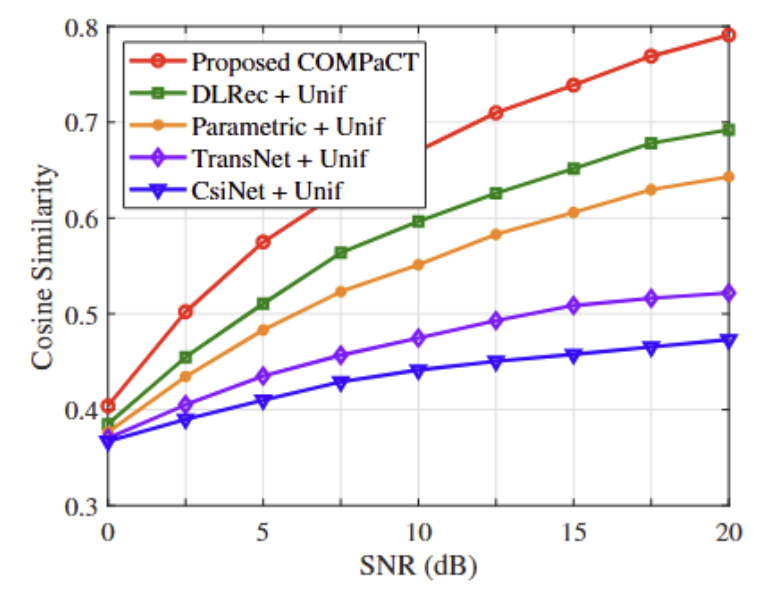}
    \caption{Cosine similarity with quantization as a function of SNR (dB), $v=60\,$km/h, $Q$ = 112.}
    \label{result_cos}
\end{figure}

In Fig.~\ref{result_cos}, we plot the cosine similarity, a metric to measure the quality of the beamforming vector, as a function of SNR.
Cosine similarity can be expressed as 
\begin{align}
    \rho = \mathbb{E}\left[\frac{1}{N_f} \sum_{s=1}^{N_f} \frac{\vert\hat{\mathbf{h}}^H[s] \mathbf{h}[s]\vert}{\Vert\hat{\mathbf{h}}[s]\Vert_2 \Vert \mathbf{h}[s]\Vert_2}\right].
\end{align}
One can observe that COMPaCT achieves the best (largest) correlation between the reconstructed channel $\hat{\mathbf{h}}[s]$ and the true channel $\mathbf{h}[s]$.
Since the beamforming gain is maximized when the beams are properly aligned with the signal propagation paths, one can deduce that beamforming gain of COMPaCT will be larger than the conventional techniques.

\begin{figure}[t]
    \centering 
    \includegraphics[width=1\linewidth]{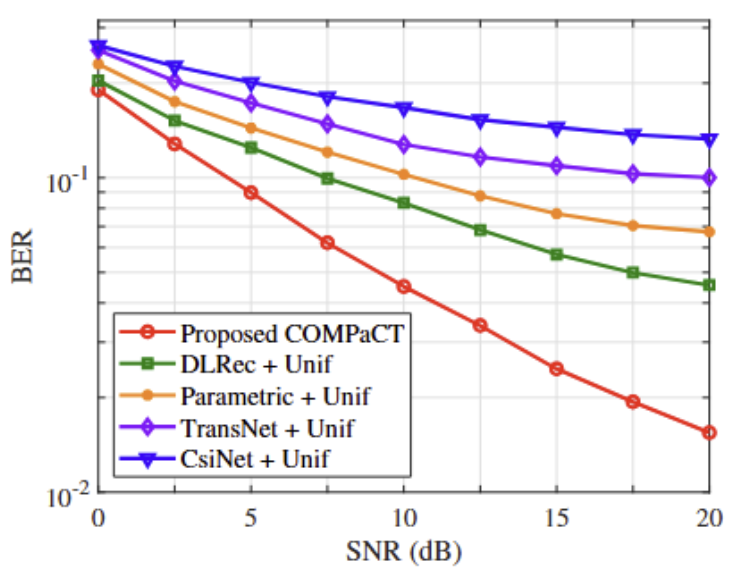}
    \caption{BER performance as a function of SNR, $v = 3\,$km/h, $Q$ = 112.}
    \label{result6}
\end{figure}
In Fig.~\ref{result6}, we compare the bit error rate (BER) of COMPaCT with conventional schemes.
We adopt the beamforming vector $\mathbf{r}_{t+\Delta t}[s] = \mathbf{\hat{h}}_{t+\Delta t}[s]/||\mathbf{\hat{h}}_{t+\Delta t}[s]||_2$ \cite{dutta2019case} and use the QPSK symbol with hard decoding.
We observe that COMPaCT achieves significant gain in BER performance in all SNR regimes.
For example, COMPaCT requires SNR $=10\,$dB to achieve BER $=0.05$, but DLRec achieves the same performance at SNR $=20\,$dB.

In Fig.~\ref{timeslots}, we compare the channel NMSE of COMPaCT as a function of the number of observed timeslots.
We observe that the proposed COMPaCT achieves more than $1\,$dB gain over the LSTM-based method.
This gap becomes even more significant when the number of timeslots is 15, showing more than $2\,$dB gain over the conventional method.
This is because the proposed COMPaCT employs a Transformer architecture that reflects relationships among all elements within a sequence simultaneously, and leverages channel parameters with longer coherence times to estimate the downlink channel.

\begin{figure}[t]
    \centering 
    \includegraphics[width=1\linewidth]{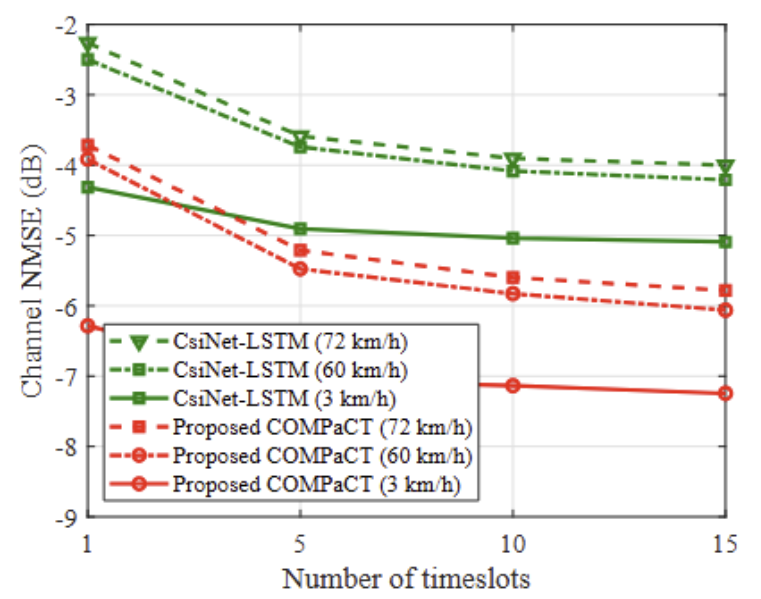}
    \caption{Channel NMSE with quantization as a function on timeslots, SNR = 20\,dB, $Q$ = 112.}
    \label{timeslots}
\end{figure}

To demonstrate the robustness of COMPaCT in various wireless environments, in Fig.~\ref{ruralurban}, we evaluate the performance of COMPaCT under different path loss models (i.e., urban area and rural area) in 3GPP Release 14~\cite{3gpp.38.901}.
We use channel data from urban areas where LoS is frequently obstructed by structures such as buildings, and from rural areas where LoS is dominant.
Training, validation, and test datasets of each scenario consist of 40,000, 10,000, and 10,000 channel data, respectively.
Interestingly, we observe that COMPaCT works well even when the tested scenario is different from the training scenario. 
The main reason for this is that the channel feedback via COMPaCT relies heavily on UE mobility in mmWave bands rather than on the specific channel model.

\begin{figure}[t]
    \centering 
    \includegraphics[width=1\linewidth]{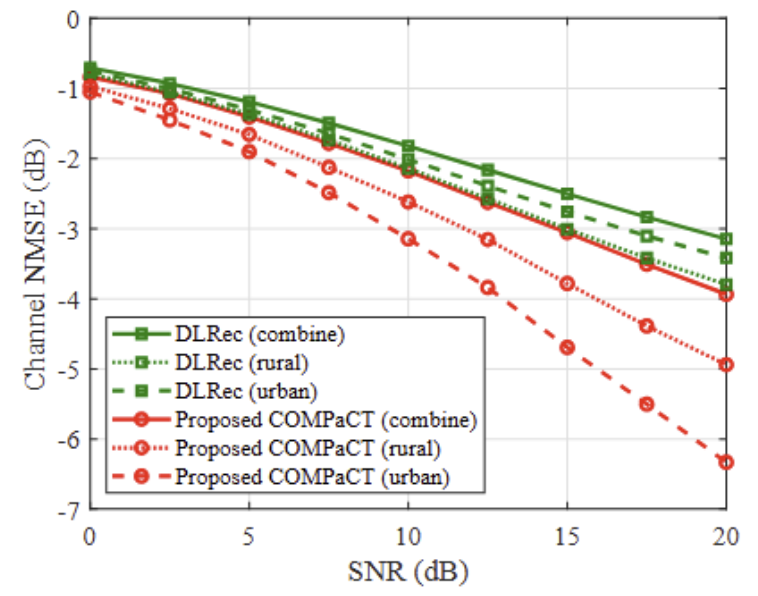}
    \caption{Channel NMSE in various channel environments.}
    \label{ruralurban}
\end{figure}

In Fig.~\ref{veloadd}, we evaluate the channel NMSE of COMPaCT in non-stationary scenarios where the UE speeds are $v=3\,$km/h (pedestrian), $v=60\,$km/h (vehicle), and $v=108\,$km/h (vehicle on highway).
We observe that the proposed COMPaCT achieves more than 1\,dB gain in terms of channel NMSE compared to DLRec in all mobility scenarios.
This result shows that the proposed COMPaCT supports accurate CSI feedback while UE is moving at extremely high speed, whether on the highway or aboard a train.

\begin{figure}[t]
    \centering 
    \includegraphics[width=1\linewidth]{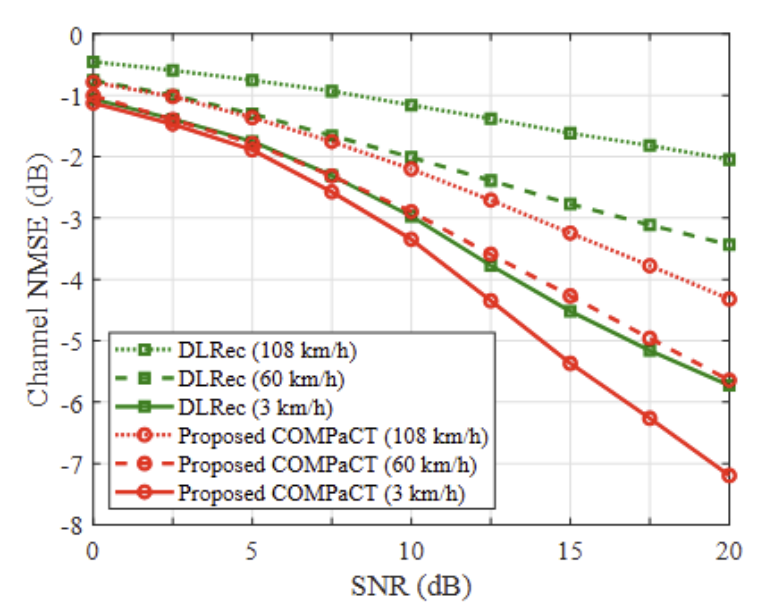}
    \caption{Channel NMSE with quantization in non-stationary scenarios.}
    \label{veloadd}
\end{figure}

\section{Conclusion}
In this paper, we proposed a DL-based parametric CSI feedback technique for mmWave massive MIMO systems.
Using the property that the mmWave MIMO channel can be expressed with a few channel parameters (e.g., angles, delays, and path gains), the proposed COMPaCT extracts geometric channel parameters from the MIMO channel matrix and then quantizes and feeds back the channel parameters instead of the full-dimensional MIMO channel. 
Then by combining the geometric channel parameters, UE reconstructs the MIMO channel matrix.
Intriguing feature of COMPaCT is that we use Transformer, the state-of-the-art DL model, for the channel parameter extraction and the MIMO channel reconstruction. 
By capturing the spatio-temporal correlation among the MIMO channel sequence via Transformer, COMPaCT can support accurate CSI feedback while assessing low feedback overhead.  
From the numerical results, we demonstrated that COMPaCT achieves a significant performance gain over the conventional CSI feedback schemes in terms of NMSE and BER.
We believe that the proposed COMPaCT can be an effective means for the channel acquisition in various upcoming 6G applications such as vehicle-to-everything (V2X) communications, non-terrestrial networks (NTN), and maritime communications.

\section*{Appendix A\\Proof of Lemma~\ref{lem:1}} \label{app:A}
First, the perturbation of $\mathbf{h}[s]$ with respect to $\boldsymbol{\theta}$ is
\begin{align}
    (\nabla_{\boldsymbol{\theta}}\mathbf{h}[s])\Delta\boldsymbol{\theta}
    =&\bigg[\frac{\partial \mathbf{h}[s]}{\partial \theta_{1}},\cdots,\frac{\partial \mathbf{h}[s]}{\partial \theta_{L}}\bigg][\Delta\theta_{1},\cdots,\Delta\theta_{L}]^{\mathrm{T}} \\
    =&\sum_{l=1}^{L}\frac{\partial \mathbf{h}[s]}{\partial \theta_{l}}\Delta\theta_{l}, \label{channel_perturb}
\end{align}
where the partial derivative of $\mathbf{h}[s]$ with respective to $\theta_{l}$ is given by
\begin{align}
    \frac{\partial \mathbf{h}[s]}{\partial \theta_{l}}
    =&\frac{\partial \mathbf{a}_{t}(\boldsymbol{\theta})}{\partial \theta_{l}}\mathrm{diag}(e^{j\boldsymbol{\phi}-2\pi f_{s}\boldsymbol{\tau}})\boldsymbol{\beta} \\
    =&\bigg[\mathbf{0}_{N},\cdots,\frac{\partial \mathbf{a}_{t}(\theta_{l})}{\partial \theta_{l}},\cdots,\mathbf{0}_{N_{t}}\bigg]\mathrm{diag}(e^{j\boldsymbol{\phi}-2\pi f_{s}\boldsymbol{\tau}})\boldsymbol{\beta} \\
    =&\bigg(\mathbf{e}_{l}^{\mathrm{T}}\otimes\frac{\partial \mathbf{a}_{t}(\theta_{l})}{\partial \theta_{l}}\bigg)\mathrm{diag}(e^{j\boldsymbol{\phi}-2\pi f_{s}\boldsymbol{\tau}})\boldsymbol{\beta}, \label{channel_gradient}
\end{align}
where $\mathbf{e}_{l}\in\mathbb{R}^{L}$ is the $L\times 1$ vector whose $l$-th element is one and all other elements are zero.
Note that the gradient of the array steering vector $\mathbf{a}_{t}(\theta_{l})$ with respect to $\theta_{l}$ is given by
\begin{align}
    \!\!\!\!\!\frac{\partial \mathbf{a}_{t}(\theta_{l})}{\partial \theta_{l}}&= \frac{\partial}{\partial \theta_{l}}\Big[1,e^{-j\frac{2\pi d\sin{\theta_{l}}}{\lambda}},\cdots,e^{-j(N_{t}-1)\frac{2\pi d\sin{\theta_{l}}}{\lambda}}\Big]^{\mathrm{T}} \\
    &=-j\frac{2\pi d\cos{\theta_{l}}}{\lambda}\Big[0,e^{-j\frac{2\pi d}{\lambda}\sin{\theta_{l}}},\cdots,\notag\\
    &\quad (N_{t}-1)e^{-j(N_{t}-1)\frac{2\pi d}{\lambda}\sin{\theta_{l}}}\Big]^{\mathrm{T}}  \\
    &=-j\frac{2\pi d\cos{\theta_{l}}}{\lambda}\bigg([0,\cdots,N_{t}-1]^{\mathrm{T}}\notag\\
    &\quad \odot\Big[1,e^{-j\frac{2\pi d}{\lambda}\sin{\theta_{l}}},\cdots,e^{-j(N_{t}-1)\frac{2\pi d}{\lambda}\sin{\theta_{l}}}\Big]^{\mathrm{T}}\bigg)  \\
    &=-j\frac{2\pi d\cos{\theta_{l}}}{\lambda}\big(\mathbf{n}_{N_{t}}\odot\mathbf{a}_{t}(\theta_{l})\big), \label{array_gradient}
\end{align}
where $\mathbf{n}_{N_{t}}=[0,1,\cdots,N_{t}-1]^{\mathrm{T}}\in\mathbb{R}^{N_{t}}$. 
By plugging \eqref{array_gradient} to \eqref{channel_gradient}, we have
\begin{align}
    \frac{\partial \mathbf{h}[s]}{\partial \theta_{l}}
    =&-j\frac{2\pi d\cos{\theta_{l}}}{\lambda}\Big(\mathbf{e}_{l}^{\mathrm{T}}\otimes\big(\mathbf{n}_{N_{t}}\odot\mathbf{a}_{t}(\theta_{l})\big)\Big)\notag\\
    &\times \mathrm{diag}(e^{j\boldsymbol{\phi}-2\pi f_{s}\boldsymbol{\tau}})\boldsymbol{\beta}.
\end{align}
Then, $\nabla_{\boldsymbol{\theta}}\mathbf{h}[s]\Delta\boldsymbol{\theta}$ in \eqref{channel_perturb} can be re-expressed as
\begin{align}
    \!\!\!\!\!\nabla_{\boldsymbol{\theta}}\mathbf{h}[s]\Delta\boldsymbol{\theta}  \!&=\sum_{l=1}^{L}\frac{\partial \mathbf{h}[s]}{\partial \theta_{l}}\Delta\theta_{l}  \\
    &= -j\frac{2\pi d}{\lambda}\sum_{l=1}^{L}\cos{\theta_{l}}\Big(\mathbf{e}_{l}^{\mathrm{T}}\otimes\big(\mathbf{n}_{N_{t}}\odot\mathbf{a}_{t}(\theta_{l})\big)\Big)\notag\\
    &\quad \times\mathrm{diag}(e^{j\boldsymbol{\phi}-2\pi f_{s}\boldsymbol{\tau}})\boldsymbol{\beta}\Delta\theta_{l}  \\
    &=-j\frac{2\pi d}{\lambda}\bigg(\sum_{l=1}^{L}\cos{\theta_{l}}\Big(\mathbf{e}_{l}^{\mathrm{T}}\otimes\big(\mathbf{n}_{N_{t}}\odot\mathbf{a}_{t}(\theta_{l})\big)\Big)\Delta\theta_{l}\bigg)\notag\\
    &\quad \times \mathrm{diag}(e^{j\boldsymbol{\phi}-2\pi f_{s}\boldsymbol{\tau}})\boldsymbol{\beta} \\
    &=-j\frac{2\pi d}{\lambda}\Big(\big[\cos{\theta_{1}}\Delta\theta_{1}\mathbf{n}_{N_{t}},\cdots,\cos{\theta_{L}}\Delta\theta_{L}\mathbf{n}_{N_{t}}\big]\notag\\
    &\quad\odot\big[\mathbf{a}_{t}(\theta_{1}),\cdots,\mathbf{a}_{t}(\theta_{L})\big]\Big)\mathrm{diag}(e^{-j2\pi f_{s}\boldsymbol{\tau}})\boldsymbol{\beta} \\
    &=\big(\mathbf{R}_{\theta}\odot\mathbf{A}_{t}(\boldsymbol{\theta})\big)\mathrm{diag}(e^{-j2\pi f_{s}\boldsymbol{\tau}})\boldsymbol{\beta},
\end{align}
where 
\begin{equation}
\mathbf{R}_{\theta}=-j\frac{2\pi d}{\lambda}\big[\cos{\theta_{1}}\Delta\theta_{1}\mathbf{n}_{N_{t}},\cdots,\cos{\theta_{L}}\Delta\theta_{L}\mathbf{n}_{N_{t}}\big].
\end{equation}
The perturbations of $\mathbf{h}[s]$ with respect to $\boldsymbol{\tau}$, $\boldsymbol{\beta}$, and $\boldsymbol{\phi}$ in Lemma~\ref{lem:1} can be obtained similarly.

\section*{Appendix B\\Proof of Theorem~\ref{thm:1}} \label{app:B}
Since $\Delta\boldsymbol{\theta}$, $\Delta\boldsymbol{\tau}$, $\Delta\boldsymbol{\beta}$, and $\Delta\boldsymbol{\phi}$ are independent, $\mathbb{E}\big[\lVert\Delta\mathbf{H}\rVert_{\text{F}}^{2}\big]$ can be expressed as
\begin{align}
\mathbb{E}\big[\lVert\Delta\mathbf{H}\rVert_{\text{F}}^{2}\big]\!=& \sum_{s=1}^{N_{f}}\mathbb{E}\big[\lVert\Delta\mathbf{h}[s]\rVert_{2}^{2}\big]  \\
=&\sum_{s=1}^{N_{f}}\Big(\mathbb{E}\big[\lVert(\nabla_{\boldsymbol{\theta}}\mathbf{h}[s])\Delta\boldsymbol{\theta}\rVert_{2}^{2}\big]+\mathbb{E}\big[\lVert(\nabla_{\boldsymbol{\tau}}\mathbf{h}[s])\Delta\boldsymbol{\tau}\rVert_{2}^{2}\big]\notag \\&+\mathbb{E}\big[\lVert(\nabla_{\boldsymbol{\beta}}\mathbf{h}[s])\Delta\boldsymbol{\beta}\rVert_{2}^{2}\big]
+\mathbb{E}\big[\lVert(\nabla_{\boldsymbol{\phi}}\mathbf{h}[s])\Delta\boldsymbol{\phi}\rVert_{2}^{2}\big]\Big).
\end{align}
First, the angle quantization term $\mathbb{E}\big[\lVert(\nabla_{\boldsymbol{\theta}}\mathbf{h}[s])\Delta\boldsymbol{\theta}\rVert_{2}^{2}\big]$ is computed as
\begin{align}
\mathbb{E}&\big[\lVert(\nabla_{\boldsymbol{\theta}}\mathbf{h}[s])\Delta\boldsymbol{\theta}\rVert_{2}^{2}\big] \nonumber \\ =&\mathbb{E}\Big[\big\lVert\big(\mathbf{R}_{\theta}\odot\mathbf{A}_{t}(\boldsymbol{\theta})\big)\mathrm{diag}(e^{j(\boldsymbol{\phi}-2\pi f_{s}\boldsymbol{\tau})})\boldsymbol{\beta}\big\rVert_{2}^{2}\Big] \\
=&\mathbb{E}\Big[\boldsymbol{\beta}^{\mathrm{H}}\mathrm{diag}(e^{j(\boldsymbol{\phi}-2\pi f_{s}\boldsymbol{\tau})})\big(\mathbf{R}_{\theta}\odot\mathbf{A}_{t}(\boldsymbol{\theta})\big)^{\mathrm{H}}\big(\mathbf{R}_{\theta}\odot\mathbf{A}_{t}(\boldsymbol{\theta})\big) \nonumber \\ &\,\, \times \mathrm{diag}(e^{j(\boldsymbol{\phi}-2\pi f_{s}\boldsymbol{\tau})})\boldsymbol{\beta}\Big] \\
=&\mathrm{tr}\Big(\mathbb{E}\Big[\big(\mathbf{R}_{\theta}\odot\mathbf{A}_{t}(\boldsymbol{\theta})\big)^{\mathrm{H}}\big(\mathbf{R}_{\theta}\odot\mathbf{A}_{t}(\boldsymbol{\theta})\big)\Big]\mathrm{diag}(e^{j(\boldsymbol{\phi}-2\pi f_{s}\boldsymbol{\tau})})\notag\\
&\,\, \times \mathbb{E}\big[\boldsymbol{\beta}\boldsymbol{\beta}^{\mathrm{H}}\big]\mathrm{diag}(-e^{j(\boldsymbol{\phi}-2\pi f_{s}\boldsymbol{\tau})})\Big). \label{first_term}
\end{align}
Since $\beta_{l}\sim\text{Unif}[0,\beta_{\mathrm{max}}]$, we get $\mathbb{E}\big[\boldsymbol{\beta}\boldsymbol{\beta}^{\mathrm{H}}\big]=\frac{\beta_{\mathrm{max}}^{2}}{12}\mathbf{I}_{L}$.
Also, using the definition of $\mathbf{R}_{\theta}$ in \eqref{R_theta} and the fact that $\Delta\theta_{u}\sim\text{Unif}\big[-\frac{\pi}{2^{Q_{\theta}}},\frac{\pi}{2^{Q_{\theta}}}\big]$, $(u,v)$-th element of $\mathbb{E}\big[\big(\mathbf{R}_{\theta}\odot\mathbf{A}_{t}(\boldsymbol{\theta})\big)^{\mathrm{H}}\big(\mathbf{R}_{\theta}\odot\mathbf{A}_{t}(\boldsymbol{\theta})\big)\big]$ is computed as
\begin{align}
    &\frac{4\pi^{2} d^{2}}{\lambda^{2}}\mathbb{E}\big[(\cos{\theta_{u}}\Delta\theta_{u}(\mathbf{n}_{N_{t}}\odot \mathbf{a}_{t}(\theta_{u})))^{\mathrm{H}}\notag\\
    &(\cos{\theta_{v}}\Delta\theta_{v}(\mathbf{n}_{N_{t}}\odot \mathbf{a}_{t}(\theta_{v})))\big]\notag\\
    &=\frac{4\pi^{2} d^{2}}{\lambda^{2}}\mathbb{E}[\cos{\theta_{u}}\cos{\theta_{v}}]\mathbb{E}[\Delta\theta_{u}\Delta\theta_{v}]\notag\\
    &\quad \times (\mathbf{n}_{N_{t}}\odot \mathbf{a}_{t}(\theta_{u}))^{\mathrm{H}}(\mathbf{n}_{N_{t}}\odot \mathbf{a}_{t}(\theta_{v}))  \\
    &\approx\delta_{u,v}\frac{4\pi^{2} d^{2}}{\lambda^{2}}\mathbb{E}\big[\cos^{2}{\theta_{u}}\big]\mathbb{E}\big[(\Delta\theta_{u})^{2}\big]\lVert\mathbf{n}_{N_{t}}\odot \mathbf{a}_{t}(\theta_{v})\rVert_{2}^{2}  \\
    &=\frac{\delta_{u,v}\pi^{4}d^{2}N_{t}(N_{t}-1)}{3\cdot 2^{2Q_{\theta}}\lambda^{2}}. \label{exp}
\end{align}
By plugging \eqref{exp} to \eqref{first_term}, we have
\begin{equation}
\mathbb{E}\big[\lVert(\nabla_{\boldsymbol{\theta}}\mathbf{h}[s])\Delta\boldsymbol{\theta}\rVert_{2}^{2}\big]\approx\frac{\pi^{4}d^{2}LN_{t}(N_{t}-1)\beta_{\mathrm{max}}^{2}}{36\cdot 2^{2Q_{\theta}}\lambda^{2}}.
\end{equation}
Also, we get the angle quantization term $C_{\theta}$ as
\begin{align}
C_{\theta}=&\sum_{s=1}^{N_{f}}\mathbb{E}\big[\lVert(\nabla_{\boldsymbol{\theta}}\mathbf{h}[s])\Delta\boldsymbol{\theta}\rVert_{2}^{2}\big] \nonumber \\
    \approx&\frac{\pi^{4}d^{2}LN_{f}N_{t}(N_{t}-1)\beta_{\mathrm{max}}^{2}}{36\cdot 2^{2Q_{\theta}}\lambda^{2}}.
\end{align}
Similarly, $C_{\tau}$, $C_{\beta}$, and $C_{\phi}$ can be expressed as functions of $Q_{\tau}$, $Q_{\beta}$, and $Q_{\phi}$, respectively.

\bibliographystyle{IEEEtran}

\end{document}